\newcommand{\set}[1]{\{#1\}}
\newcommand{\inset}[2]{\{#1 \mid #2\}}
\newcommand{\name}[1]{{\emph{#1}}}
\newcommand{\size}[1]{\left|#1\right|}
\newcommand{\order}[1]{O(#1)}
\newcommand{\etal}{\textit{et al.}\xspace}
\newtheorem{lemma}{Lemma}
\newtheorem{theorem}[lemma]{Theorem}
\newtheorem{problem}{Problem}
\newcommand{\cand}[1]{C\left(#1\right)}
\newcommand{\sig}[1]{\mathcal{#1}}
\newcommand{\Fam}[1]{\mathcal{F}\left(#1\right)}
\newcommand{\Parent}[1]{\mathcal{P}\left(#1\right)}
\newcommand{\argmax}{\mathop{\rm arg~max}\limits}
\newcommand{\cbeo}{\texttt{CBEO}\xspace}
\newcommand{\Delws}[2]{Del_\text{W}\left( #1, #2 \right)}
\newcommand{\Delaws}[2]{Del_\text{A}\left( #1, #2 \right)}
\newcommand{\setWithRange}[3]{#1_{#2:#3}}
\title{An Efficient Algorithm for Enumerating Chordal Bipartite Induced Subgraphs in Sparse Graphs}
\author[1]{Kazuhiro Kurita}
\author[2]{Kunihiro Wasa}
\author[1]{Hiroki Arimura}
\author[2]{Takeaki Uno}
\affil[1]{IST, Hokkaido University, Sapporo, Japan\\
  \texttt{\{k-kurita, arim\}@ist.hokudai.ac.jp}}
\affil[2]{National Institute of Informatics, Tokyo, Japan\\
  \texttt{\{wasa, uno\}@nii.ac.jp}}
\begin{document}
\maketitle
\begin{abstract}
    In this paper, we propose a characterization of chordal bipartite graphs and an efficient enumeration algorithm for chordal bipartite induced subgraphs. 
A chordal bipartite graph is a bipartite graph without induced cycles with length six or more. 
It is known that the incident graph of a hypergraph is chordal bipartite graph if and only if the hypergraph is $\beta$-acyclic. 
As the main result of our paper, 
we show that a graph $G$ is chordal bipartite if and only if there is a special vertex elimination ordering for $G$, called \cbeo. 
Moreover, we propose an algorithm \EnumCB which enumerates all chordal bipartite induced subgraphs in $O(kt\Delta^2)$ time per solution on average, 
where $k$ is the degeneracy, $t$ is the maximum size of $K_{t,t}$ as an induced subgraph, and $\Delta$ is the degree. 
\EnumCB achieves constant amortized time enumeration for bounded degree graphs.

\end{abstract}

\section{Intorduction}
\label{sec:intro}
A graph $G$ is \name{chordal} 
if any cycle with length four or more in $G$ has an edge, 
called a \name{chord}, 
which connects two nonconsecutive vertices on the cycle.  
If $G$ is a chordal, many NP-complete problems can be solved in polynomial time~\cite{Gavril:SIAM:1972}. 
Graph chordality is also interested in enumeration algorithm area. 
There are many efficient algorithms for enumerating subgraphs and supergraphs with chordality~\cite{Kiyomi:WG:2006,Kiyomi:IEICE:2006,Wasa:DS:2013,Daigo:SOFSEM:2009}. 
Moreover, chordality of a bipartite graph has been also well studied. 
A chordal bipartite graph is a bipartite graph without any induced cycles with length six or more. 
There are many characterizations of chordal bipartite graphs~\cite{Brandstadt:1999,Huang:JCT:2006,Uehara:ICALP:2002,Duris:IPL:2012}. 
In addition, the graph chordality is related to the hypergraph acyclicity~\cite{Beeri:JACM:1983,Brandstadt:1999}. 
In particular, Ausiello \etal  show that a hypergraph is $\beta$-acyclic if and only if its bipartite incident graph is chordal bipartite.

A subgraph enumeration problem is defined as follows: To output all subgraphs satisfying a constraint. 
To evaluate the efficiency of enumeration algorithms, we often measure in terms of the size of input and the number of outputs. 
An enumeration algorithm is \name{polynomial delay} if the maximum interval between two consecutive solutions is polynomial. 
Moreover, an enumeration algorithm is \name{amortized polynomial} if the total running time is $\order{M\cdot poly(N)}$ time, 
where $M$ is the number of solutions, $N$ is the input size, and $poly$ is a polynomial function. 
In enumeration algorithm area, 
there are efficient algorithms for sparse graphs~\cite{Wasa:Arimura:Uno:ISAAC:2014,Wasa:COCOON:2018,Alessio:Roberto:ICALP:2016,Eppstein:Loffler:EXP:2013,Kante:Limouzy:FCT:2011,Kazuhiro:ISAAC:2018}. 
Especially, \name{the degeneracy}~\cite{Lick:CJM:1970} of graphs has been payed much attention for constructing efficient enumeration algorithms. 

\noindent \textbf{Main results:}
In this paper, we propose chordal bipartite induced subgraph enumeration algorithm \EnumCB. 
In \EnumCB, we use a similar strategy as enumeration of chordal induced subgraphs. 
Kiyomi and Uno~\cite{Kiyomi:IEICE:2006} use a special vertex ordering, called the \name{perfect elimination ordering} $(v_1, \dots, v_n)$. 
In this ordering, any vertex $v_i$ is simplicial in $G[\setWithRange{V}{i}{n}]$, where $\setWithRange{V}{i}{n} = \inset{v_j \in V}{i \le j \le }n$.
Here, a vertex is called \name{simplicial} if an induced subgraph of neighbors becomes a clique. 
Kiyomi and Uno developed a constant delay enumeration algorithm for chordal induced subgraphs~\cite{Kiyomi:IEICE:2006} by using this ordering. 
Likewise a perfect elimination ordering of a chordal graph, we show that every chordal bipartite graph has a special vertex elimination ordering, 
and this is actually a necessary and sufficient condition of a chordal bipartite graph. 
We call this vertex ordering a \name{chordal bipartite elimination ordering} (\cbeo). 
\cbeo is defined by the following operation: Recursively remove a weak-simplicial vertex~\cite{Uehara:ICALP:2002}. 
Interestingly, \cbeo is a relaxed version of a vertex ordering proposed by Uehara~\cite{Uehara:ICALP:2002}. 
By using \cbeo, 
we propose our enumeration algorithm which outputs all chordal bipartite induced subgraphs in amortized $\order{kt\Delta^2}$ time,
where $k$ is the degeneracy of a graph, $t$ is the maximum size of $K_{t, t}$ as an induced subgraph, and $\Delta$ is the degree of $G$. 
Note that $t$ is bounded by $k$. 
Hence, \EnumCB enumerates chordal bipartite induced subgraphs in constant amortized time for bounded degree graphs.

\section{Preliminaries}
\label{sec:prelim}

Let $G = (V, E)$ be a simple graph, that is there is no self loops and multiple edges.
$u, v \in V$ are \name{adjacent} if there is an edge $\set{u, v} \in E$. 
The sequence of distinct vertices $\pi = (v_1, \dots, v_k)$ is a \name{path} if $v_i$ and $v_{i + 1}$ are adjacent for each $1 \le i \le k - 1$. 
If $v_1 = v_k$ holds in a path $C = (v_1, \dots, v_k)$, we call $C$ a \name{cycle}. 
\name{The distance} $dist(u, v)$ between $u$ and $v$ is defined by the length of a shortest path between $u$ and $v$.
We call a graph $H = (U, F)$ a \name{subgraph} of $G = (V, E)$ if $U \subseteq V$ and $F \subseteq E$ hold. 
A subgraph $H = (U, F)$ is an \name{induced subgraph} of $G$ if $F = \inset{\set{u, v} \in E}{u, v \in U}$ hold. 
In addition, we denote an induced subgraph as $G[U]$. 
\name{The neighbor of $v$} is the set of vertices $\inset{u \in V}{\set{u, v}\in E}$ and denoted by $N_G(v)$. 
If there is no confusion, we denote $N(v)$ as $N_G(v)$
In addition, we denote $N(v) \cap X$ as $N_X(v)$, where $X$ is a subset of $V$. 
The set of vertices $N[v] = N(v) \cup \set{v}$ is called \name{the closed neighbor}. 
We define \name{the neighbor with distance $k$} and \name{the neighbor with distance at most $k$} as 
$N^k(v) = \inset{u \in V}{dist(u, v) = k}$ and $N^{1:k}(v) = \bigcup_{1 \le i \le k} N(v)^k$, respectively. 
$\sig N(v)$ is \name{the neighbor set of neighbors} defined as $\inset{N(u)}{u \in N(v)}$. 
The \name{degree of $v$ $d(v)$} is the size of $N(v)$. 
The degree of a graph $G$ is the maximum size of $d(v)$ in $V$. 
Let $U$ be a subset of $V$. 
For vertices $u, v \in V$, $u$ and $v$ are \name{comparable} if $N(v) \subseteq N(u)$ or $N(v) \supseteq N(u)$ hold. 
Otherwise, $u$ and $v$ are \name{incomparable}. 
\begin{figure}[t]
    \centering
    \includegraphics[width=0.8\textwidth]{./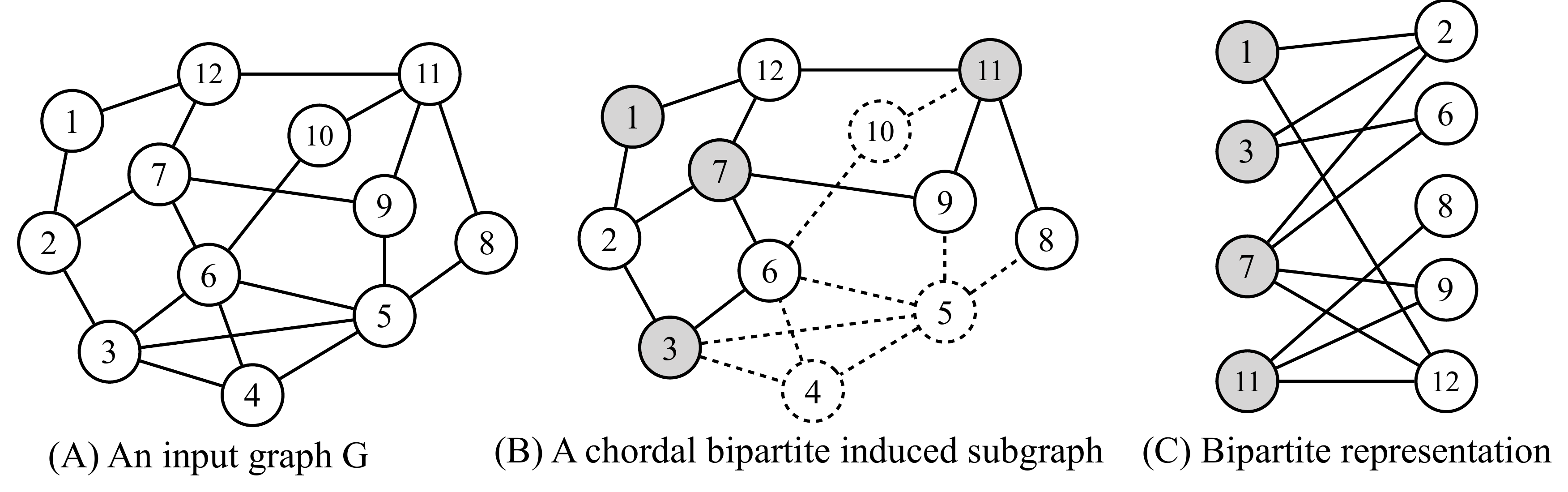}
    \caption{(A) shows an input graph $G$ and (B) shows one of the solutions $B = (X, Y, E)$, where $X = \set{1, 3, 7, 11}$ and $Y = \set{2, 6, 8, 9, 12}$. (C) shows the graph $B$ drawn by dividing $X$ and $Y$. }
    \label{fig:chordal}
\end{figure}
Let $B = (X, Y, E)$  be a bipartite graph. 
We call $B$ is \name{a chordal bipartite} graph if there is no induced cycles with the length four or more. 
A bipartite graph $B$ is \name{biclique} if any pair of vertices $x \in X$ and $y \in Y$ are adjacent. 
We denote a biclique as $K_{a, b}$ if $\size{X} = a$ and $\size{Y} = b$. 
In this paper, we consider only the case $a = b$ 
and the size of a biclique $K_{t, t}$ is $t$. 


Let $\sig{H} = (V, \sig E)$ be a \name{hypergraph}, 
where  $V$ is a set of vertices and $\sig E$ is a set of subsets of $V$. 
We call an element of $\sig E$ a \name{hyperedge}. 
For a vertex $v$, 
let 
$\sig H(v)$ be the set of edges $\inset{e \in \sig H}{v \in e}$ which contain $v$. 
A sequence of edges $C = (e_1, \dots, e_k)$ is a \name{berge cycle} if there exists $k$ distinct vertices $v_1, \dots, v_k$ such that 
$v_k \in e_1 \cap e_k$ and $v_i \in e_i \cap e_{i + 1}$ for each $1 \le i < k$. 
A berge cycle $C = (e_1, \dots, e_k)$ is a \name{pure cycle} if $k \ge 3$ and $e_i \cap e_j \neq \emptyset$ hold for any distinct $i$ and $j$, where $i$ and $j$ satisfy one of the following three conditions: (I) $|i - j| = 1$, (II) $i = 1$ and $j = k$, or (III) $i = k$ and $j = 1$. 
A cycle $C = (e_1, \dots, e_k)$ is a \name{$\beta$-cycle} if the sequence of $(e'_1, \dots, e'_k)$ is a pure cycle, where $e'_i = e_i \setminus \bigcap_{1 \le j \le k} e_j$. 
We call a hypergraph $\sig H$ \name{$\beta$-acyclic} if $\sig H$ has no $\beta$-cycles. 
We call a vertex $v$ a \name{$\beta$-leaf (or nest point)} if $e \subseteq f$ or $e \supseteq f$ hold for any pair of edges $e, f \in \sig H(v)$.
A bipartite graph $\sig I(\sig H) = (X, Y, E)$ is a \name{incidence graph} of a hypergraph $\sig H = (V, \sig E)$
if $X = V$, $Y = \sig E$, and $E$ includes an edge $\set{v, e}$ if $v \in e$, where $v \in V$ and $e \in \sig E$.
$\sig V$ is \name{totally ordered} if for any pair $X, Y \in \sig V$ of vertex subsets, 
either $X \subseteq Y$ or $X \supseteq Y$. 
we assume that $\sig{H}$ is not trivial, that is, $\sig{H}$ has more than one vertex.

Finally, we define our problem, chordal bipartite induced subgraph enumeration problem. 
In Fig.~\ref{fig:chordal}, we show an input graph $G$ and one of the solutions. 
\begin{problem}[Chordal bipartite induced subgraph enumeration problem]
\label{prob:def}
    Output all chordal induced subgraphs in an input graph $G$ without duplication. 
\end{problem}

\section{A Characterization of Chordal Bipartite Graphs}
\label{sec:char}

We propose a new characterization of chordal bipartite graphs. 
By using this characterization, we construct a proposed algorithm in Sect.~\ref{sec:algo}. 
We first give notions. 
A vertex $v$ is \name{weak-simplicial}~\cite{Uehara:ICALP:2002} 
if $N(v)$ is an independent set and any pair of neighbors of $v$ are comparable. 
A bipartite graph $B = (X, Y, E)$ is \name{bipartite chain} 
if any pair of vertices in $X$ or $Y$ are comparable, that is,  
$N(u) \subseteq N(v)$ or  $N(u) \supseteq N(v)$ holds for any $u, v \in X$ or $u, v \in Y$. 
To show the new characterization, we use the following two theorems.

\begin{theorem}[Theorem 1 of \cite{Ausiello:JCS:1986} ]
\label{theo:hyp}
    $\sig I(\sig{H})$ is chordal bipartite if and only if $\sig H$ is $\beta$-acyclic. 
\end{theorem}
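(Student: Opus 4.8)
The plan is to build an explicit dictionary between cycles of the incidence graph $\sig I(\sig H)$ and berge cycles of $\sig H$, and then to match the chord-freeness condition on one side with the defining condition of a $\beta$-cycle on the other. Since $\sig I(\sig H)$ is bipartite with parts $X = V$ and $Y = \sig E$, every cycle alternates between vertices and hyperedges and so has even length $2k$; writing such a cycle as $v_1, e_1, v_2, e_2, \dots, v_k, e_k, v_1$, the incidences forced by its edges are exactly $v_i \in e_i$ and $v_i \in e_{i+1}$ with indices read cyclically, which says precisely that $(e_1, \dots, e_k)$ is a berge cycle of $\sig H$ witnessed by the distinct vertices $v_1, \dots, v_k$. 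Conversely every berge cycle yields such a closed walk. Because chordal bipartiteness forbids induced cycles of length six or more, the theorem reduces to showing that $\sig I(\sig H)$ has an induced cycle of length $2k \ge 6$ if and only if $\sig H$ has a $\beta$-cycle.

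First I would make the chord condition explicit. Along the cycle, the witness $v_i$ is adjacent in $\sig I(\sig H)$ to exactly its two neighbouring hyperedges $e_i$ and $e_{i+1}$, so a chord is nothing but an extra incidence $v_i \in e_j$ with $j \notin \set{i, i+1}$. Hence the cycle is induced exactly when each $v_i$ lies in no hyperedge of the cycle besides its two neighbours, and this is what the $\beta$-cycle definition encodes: a vertex lying in every $e_j$ simultaneously would be a chord to all of them, and deleting the common part $\bigcap_{1 \le j \le k} e_j$ to form $e'_i = e_i \setminus \bigcap_{1 \le j \le k} e_j$ is precisely the move that discards such globally shared vertices while preserving the consecutive intersections, after which the requirement that $(e'_1, \dots, e'_k)$ be a pure cycle captures the remaining incidence pattern.

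For the implication ``$\beta$-cycle $\Rightarrow$ induced cycle'' I would start from a $\beta$-cycle and, should its closed walk still carry a chord, shrink it. The key device is to fix a $\beta$-cycle with $k$ minimum: a chord $v_i \in e_j$ between non-neighbouring positions splits the walk into two strictly shorter closed walks, and I would argue that at least one of them still meets the $\beta$-cycle conditions (distinct edges and witnesses, length at least three after the common-intersection reduction), contradicting minimality. For the reverse implication the induced condition supplies the incidence pattern directly, so it remains only to check $k \ge 3$, which is guaranteed by length $2k \ge 6$, and that the common-intersection reduction leaves each witnessed consecutive intersection nonempty.

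I expect the shortcutting argument in the ``$\beta$-cycle $\Rightarrow$ induced cycle'' direction to be the main obstacle: one must manage the common intersection $\bigcap_{1 \le j \le k} e_j$ carefully while cutting a chorded berge cycle down to a chordless one, and confirm that the shorter walk produced is again a legitimate $\beta$-cycle rather than a degenerate configuration with repeated edges, repeated witnesses, or length below three. Once minimality rules out every chord, the equivalence with chordal bipartiteness is immediate from the alternation structure of the incidence graph.
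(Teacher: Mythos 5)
Your cycle--berge-cycle dictionary and your identification of chords with extra incidences $v_i \in e_j$ are both correct, but note first that the paper offers no proof of this statement at all -- it is imported verbatim as Theorem 1 of Ausiello et al.\ -- so your argument stands or falls on its own. And it has a genuine gap: you have located the difficulty in the wrong direction. An induced cycle of length $2k \ge 6$ does \emph{not} ``supply the incidence pattern directly'' for a $\beta$-cycle. Inducedness constrains only the cycle's own witness vertices $v_1, \dots, v_k$, whereas the pure-cycle condition on $(e'_1, \dots, e'_k)$ constrains \emph{all} vertices of $\sig H$: a vertex $w$ lying in two non-consecutive edges $e_i, e_j$, but not on the cycle and not in $\bigcap_{1 \le l \le k} e_l$, survives the reduction and destroys purity without creating any chord. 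Concretely, take $e_1 = \set{v_1, v_4, w}$, $e_2 = \set{v_1, v_2}$, $e_3 = \set{v_2, v_3, w}$, $e_4 = \set{v_3, v_4}$: the incidence graph contains the induced eight-cycle $e_1 v_1 e_2 v_2 e_3 v_3 e_4 v_4$, yet $(e_1, e_2, e_3, e_4)$ is not a $\beta$-cycle, since $\bigcap_l e_l = \emptyset$ and $w \in e'_1 \cap e'_3$. So the shortcutting/minimality machinery you built belongs to \emph{this} direction: such a $w$ spawns a strictly shorter berge cycle (here $(e_1, e_2, e_3)$ witnessed by $v_1, v_2, w$), and one must argue -- say, over a minimal-length special cycle, checking that both arcs cut off by $w$ have length at least three because $i$ and $j$ are non-consecutive, and handling the possibility that $w$ lies in further edges of the chosen arc -- that iteration terminates in a legitimate $\beta$-cycle.

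By contrast, the direction you flag as the main obstacle is immediate once the pure-cycle definition is read with its intended force, namely that \emph{only} consecutive reduced edges intersect (the paper's printed definition omits the emptiness requirement for non-consecutive pairs, which may be what misled you; under that literal reading your worry would be real, but then your own claim that purity ``captures the remaining incidence pattern'' would fail as well). Given a $\beta$-cycle, take witnesses $v_i \in e'_i \cap e'_{i+1}$ of the reduced pure cycle. A chord $v_i \in e_j$ would give $v_i \in e'_j$, since $v_i \in e'_i$ already excludes $v_i$ from $\bigcap_l e_l$; for $k \ge 4$ this makes two non-consecutive reduced edges intersect, a contradiction, and for $k = 3$ the only candidate chord forces $v_i \in e_1 \cap e_2 \cap e_3 = \bigcap_l e_l$, again a contradiction. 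The same identity $e'_1 \cap e'_2 \cap e'_3 = \emptyset$ shows the witnesses are automatically distinct, so the cycle through the $v_i$ and $e_i$ is induced of length $2k \ge 6$ with no shrinking whatsoever. In short: swap where the work goes, and make explicit the outside-vertex obstruction that your dictionary currently conflates with chords.
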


\begin{theorem}[Theorem 3.9 of \cite{Brault:ACM:2016}]
\label{theo:brault}
A $\beta$-acyclic hypergraph $\sig H = (\sig V, \sig E)$ with at least two vertices has two distinct $\beta$-leaves that are not neighbors in $\sig{H'} = (\sig V, \sig E \setminus \set{\sig V})$.
\end{theorem}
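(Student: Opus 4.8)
The statement is the $\beta$-acyclic analogue of the classical fact that every chordal graph which is not complete has two non-adjacent simplicial vertices, and the plan is to transport that argument through the incidence graph. First I would pass to $\sig{I}(\sig{H}) = (X, Y, E)$, which is chordal bipartite by Theorem~\ref{theo:hyp}. Here $X = \sig{V}$ and $Y = \sig{E}$, and the incidence-graph neighbourhood of a hyperedge $e$ is exactly its vertex set, so two edges $e, f$ both containing a vertex $v$ are comparable as sets iff the corresponding neighbours of $x_v$ are comparable in $\sig{I}(\sig{H})$. Since $Y$ is independent, this gives the dictionary that $v$ is a $\beta$-leaf of $\sig{H}$ if and only if $x_v$ is weak-simplicial in $\sig{I}(\sig{H})$. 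Likewise, two vertices are neighbours in $\sig{H}' = (\sig{V}, \sig{E} \setminus \set{\sig{V}})$ iff the corresponding $X$-vertices share a common neighbour in $Y \setminus \set{y_{\sig{V}}}$, where $y_{\sig{V}}$ is the (possibly absent) $Y$-vertex of the edge $\sig{V}$, which is adjacent to all of $X$. Thus it suffices to exhibit two weak-simplicial $X$-vertices with no common neighbour other than $y_{\sig{V}}$.

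I would then run a separator induction on $\size{\sig{V}}$. The base case $\size{\sig{V}} = 2$ is immediate. For the inductive step, I would fix $u, v$ that are non-neighbours in $\sig{H}'$ — such a pair exists for a nontrivial $\beta$-acyclic hypergraph, since otherwise the edges other than $\sig{V}$ would cover every pair, which a short argument excludes — and choose a minimal $u$--$v$ separator $S$, i.e.\ a minimal vertex set whose deletion places $u$ and $v$ in different components of the hyperedge-connectivity relation of $\sig{H}'$. Let $\sig{H}_u$ and $\sig{H}_v$ be the sub-hypergraphs induced by $S$ together with the component of $u$ and of $v$ respectively; each is $\beta$-acyclic because $\beta$-acyclicity is hereditary. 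Applying the existence of a $\beta$-leaf (the Dirac analogue, which I would first establish, e.g.\ by taking an extremal vertex of a breadth-first layering of $\sig{I}(\sig{H})$, or by appealing to Uehara's ordering~\cite{Uehara:ICALP:2002}) inside each of $\sig{H}_u$ and $\sig{H}_v$, I would extract a $\beta$-leaf lying strictly inside each component. Because all edges through such a vertex are confined to its component's closure, it remains a $\beta$-leaf of the whole $\sig{H}$, and two vertices taken from different components cannot lie in a common edge of $\sig{H}'$, so they are the required non-neighbours.

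The technical heart, and the step I expect to be the main obstacle, is the $\beta$-acyclic analogue of ``minimal separators are cliques''. Concretely, I would prove that the hyperedges meeting a minimal separator $S$ restrict to a chain on $S$, i.e.\ $\set{e \cap S \mid e \in \sig{H},\ e \cap S \neq \emptyset}$ is totally ordered by inclusion; this is what guarantees that a $\beta$-leaf extracted from $\sig{H}_u$ stays comparable once the separator edges of the $\sig{H}_v$-side are restored, and it is also what forces the extracted leaf to be choosable off $S$, exactly as a clique cannot host two non-adjacent simplicial vertices in the chordal proof. I would derive this chain property from chordal bipartiteness: two edges incomparable on $S$, together with the vertices $u$ and $v$ that $S$ minimally separates, should yield an induced cycle of length at least six in $\sig{I}(\sig{H})$, contradicting Theorem~\ref{theo:hyp}. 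The remaining bookkeeping concerns the edge $\sig{V}$: since $y_{\sig{V}}$ is adjacent to all of $X$ it never obstructs the choice of $u, v$ and must simply be carried along in the separator analysis, which is precisely what passing to $\sig{H}'$ accomplishes.
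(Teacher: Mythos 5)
First, a point of reference: the paper does not prove this statement at all. Theorem~\ref{theo:brault} is imported verbatim as Theorem~3.9 of \cite{Brault:ACM:2016}, and the only in-source argument is a commented-out draft that proceeds by a completely different induction (repeatedly removing an $\alpha$-leaf, which exists because $\beta$-acyclic implies $\alpha$-acyclic, splitting on whether $\sig V \in \sig E$). So your Dirac-style minimal-separator programme is a genuinely different route; its skeleton is plausible (the dictionary you set up is exactly Lemma~\ref{lem:leafweak}, $\beta$-acyclicity is indeed hereditary under taking traces on $C_u \cup S$, and an interior $\beta$-leaf of a side sub-hypergraph does lift to $\sig H$ because no edge of $\sig H'$ crosses the separator, while the edge $\sig V$ is comparable to everything).

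However, as submitted this is a programme with genuine gaps, not a proof. (1) The chain property of minimal separators --- that $\inset{e \cap S}{e \in \sig E \setminus \set{\sig V},\ e \cap S \neq \emptyset}$ is totally ordered --- is the crux, and you only conjecture it; the sketched derivation (incomparable traces yield an induced cycle of length at least six via Theorem~\ref{theo:hyp}) still needs the standard shortcutting-to-an-induced-cycle work, including the case where both witnessing edges lie on the \emph{same} side of $S$. (2) The seed step is not ``a short argument'': the claim that in a nontrivial $\beta$-acyclic hypergraph the edges other than $\sig V$ never cover every pair of vertices is, through the incidence graph, already a weak form of the theorem being proved, so asserting it without proof risks circularity; it must be established independently or folded into the same induction. (3) As written, ``applying the existence of a $\beta$-leaf inside each of $\sig H_u$ and $\sig H_v$'' does not suffice: a single guaranteed leaf may lie in $S$. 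You need the full two-leaf inductive hypothesis in $\sig H_u$, plus the fact that the chain property makes the vertices of $S$ pairwise adjacent \emph{in $\sig H_u$ minus its full edge} (so at most one of the two non-adjacent leaves sits in $S$); proving that adjacency is witnessed by a non-full edge requires invoking separator minimality on the opposite side, a step your plan omits even though your closing sentence shows you intend it. Until (1) and (2) are actually proven, the proposal does not yet establish the statement that the paper obtains by citation.
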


Brault~\cite{Brault:ACM:2016} gives a vertex elimination ordering $(v_1, \dots, v_n)$ for a hypergraph $\sig H$, called a $\beta$-elimination ordering. 
The definition is as follows: 
For any $1 \le i \le n$, 
$v_i$ is a $\beta$-leaf in $\sig H[\setWithRange{V}{i}{n}]$, where $\setWithRange{V}{i}{n} = \inset{v_j \in V}{i \le j \le n}$. 
It is known that $\sig H$ is $\beta$-acyclic 
if and only if there is a $\beta$-elimination ordering of $\sig H$. 
Similarly, 
in this paper,
for any graph $G$, 
we define a vertex elimination ordering $(v_1, \dots, v_n)$ for $G$, called \cbeo, as follows: 
for any $1\le i \le n$,  $v_i$ is a weak-simplicial in $G[\setWithRange{V}{i}{n}]$. 
In the remaining of this section, 
we show that a graph is chordal bipartite if and only if there is \cbeo for $G$. 
Lemma~\ref{lem:leafweak} shows that a $\beta$-leaf of a hypergraph is weak-simplicial in its incident graph.  

\begin{lemma}
\label{lem:leafweak}
    Let $\sig H = (\sig V, \sig E)$ be a hypergraph, $v$ be a vertex in $\sig{V}$, 
    and $v'$ be the corresponding vertex of $v$ in $X$ of $\sig I(\sig H)$. 
    Then, $v$ is a $\beta$-leaf in $\sig H$ if and only if $v'$ is a weak-simplicial vertex in $\sig I(\sig{H})$. 
\end{lemma}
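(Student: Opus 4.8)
The plan is to translate each clause of the definition of \emph{weak-simplicial} into the language of the hypergraph $\sig{H}$ through the incidence graph $\sig{I}(\sig{H})$, and observe that the two definitions match clause by clause. First I would record the two neighborhood identities that drive the whole argument. By the definition of the incidence graph, an edge $\set{u, e}$ with $u \in X$ and $e \in Y$ is present exactly when $u \in e$. Reading this off from the side of $v'$ gives $N(v') = \sig{H}(v)$, the set of hyperedges containing $v$, viewed as vertices of $Y$. Reading it off from the side of a hyperedge-vertex $e \in Y$ gives that $N(e) \subseteq X$ is precisely the vertex set of the hyperedge $e$; under the identification $X = \sig{V}$ this is just $N(e) = e$.

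With these two identities the clauses separate cleanly. The independence clause — that $N(v')$ is an independent set — is automatic: $N(v') \subseteq Y$, and since $\sig{I}(\sig{H})$ is bipartite there are no edges inside $Y$, so any subset of $Y$ is independent. Hence this clause carries no information and can be dispatched in a single line. All of the content therefore sits in the comparability clause.

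For the comparability clause I would take two arbitrary neighbors $e, f \in N(v') = \sig{H}(v)$ and apply the second identity. They are comparable in $\sig{I}(\sig{H})$, i.e.\ $N(e) \subseteq N(f)$ or $N(e) \supseteq N(f)$, if and only if $e \subseteq f$ or $e \supseteq f$ as vertex subsets of $\sig{V}$. But the latter, quantified over all pairs $e, f \in \sig{H}(v)$, is exactly the defining condition for $v$ to be a $\beta$-leaf. Chaining the equivalences then settles both directions simultaneously: $v'$ is weak-simplicial iff every pair of its neighbors is comparable (the independence clause being free) iff every pair of hyperedges in $\sig{H}(v)$ is nested iff $v$ is a $\beta$-leaf.

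I do not expect a genuine obstacle here, since the lemma is in essence a dictionary translation between $\sig{H}$ and $\sig{I}(\sig{H})$. The only point that deserves a moment of care is the faithfulness of the identification $e \mapsto N(e)$ on $Y$: because $Y = \sig{E}$ is a set of vertex subsets and $N(e) = e$ holds under $X = \sig{V}$, set inclusion of neighborhoods coincides exactly with set inclusion of hyperedges, so no ambiguity arises and the comparability condition transfers verbatim.
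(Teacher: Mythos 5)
Your proof is correct and follows essentially the same route as the paper's: both translate the $\beta$-leaf condition (all hyperedges in $\sig H(v)$ pairwise nested) into the comparability of $N(v')$ via the identities $N(v')=\sig H(v)$ and $N(e)=e$. You are in fact slightly more careful than the paper, which leaves the independence clause of weak-simpliciality implicit, whereas you explicitly note it is automatic from bipartiteness.
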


\begin{proof}
    We assume that $v$ is a $\beta$-leaf in $\sig H$. 
    Let $v'$ be the vertex corresponding to $v$ in $\sig I(\sig H)$. 
    From the definition of a $\beta$-leaf, 
    $\sig N(v)$ is also totally ordered in $\sig I(\sig H)$. 
    Thus, $v$ is a weak-simplicial vertex in $\sig I(\sig H)$. 

    We next assume that $v'$ is weak-simplicial in $X$ of $\sig I(\sig H)$. 
    From the definition, $\sig N(v)$ is totally ordered. 
    Thus, $\sig H(v)$ is totally ordered.
    Therefore, $v$ is a $\beta$-leaf in $\sig H$ and the statement holds. \qed
\end{proof}

From Lemma~\ref{lem:leafweak}, a $\beta$-leaf $v$ of $\sig H$ corresponds to a weak-simplicial vertex of an incident graph $\sig I(\sig H)$. 
We next show that a chordal bipartite graph has at least one weak-simplicial vertex from Theorem~\ref{theo:hyp}, Theorem~\ref{theo:brault}, and Lemma~\ref{lem:leafweak}. 




\begin{lemma}
\label{lem:twoweak}
    Let $B = (X, Y, E)$ be a chordal bipartite graph with at least two vertices. 
    If there is no vertex $v$ in $B$ such that $N(v) = X$ or $N(v) = Y$, 
    then $B$ has at least two weak-simplicial vertices which are not adjacent. 
\end{lemma}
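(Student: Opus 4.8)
The plan is to read $B$ as the incidence graph of a hypergraph and to pull the two-$\beta$-leaf statement of Theorem~\ref{theo:brault} back to $B$ through Lemma~\ref{lem:leafweak}. The observation that trivializes the non-adjacency requirement is that any two vertices on the same side of the bipartition of $B$ are automatically non-adjacent; hence it suffices to exhibit two \emph{distinct} weak-simplicial vertices lying on a single side. Concretely, I identify $B = (X, Y, E)$ with $\sig I(\sig H)$, where $\sig H = (\sig V, \sig E)$ has vertex set $\sig V = X$ and one hyperedge $N(y) \subseteq X$ for every $y \in Y$.

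First I would dispose of the degenerate configurations. If $\size{X} = \size{Y} = 1$, say $X = \set{x}$ and $Y = \set{y}$, then the hypothesis that no vertex has neighborhood $X$ or $Y$ forbids the edge $\set{x, y}$, since that edge would force $N(x) = Y$ and $N(y) = X$; thus $x$ and $y$ are isolated, each has empty neighborhood and is therefore weak-simplicial, and they are non-adjacent. The same remark settles the cases in which a side is empty or consists only of isolated vertices. Consequently I may assume that at least one side, which by symmetry I take to be $X$, satisfies $\size{X} \ge 2$.

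For this main case I argue as follows. By Theorem~\ref{theo:hyp}, since $B = \sig I(\sig H)$ is chordal bipartite, the hypergraph $\sig H$ is $\beta$-acyclic. The hypothesis that no vertex of $B$ has neighborhood equal to $X$ says exactly that no hyperedge of $\sig H$ equals the whole vertex set, that is, $\sig V \notin \sig E$; hence $\sig H' = (\sig V, \sig E \setminus \set{\sig V})$ coincides with $\sig H$. Applying Theorem~\ref{theo:brault} to $\sig H$, which now has at least two vertices, yields two distinct $\beta$-leaves $u$ and $w$. By Lemma~\ref{lem:leafweak} the corresponding vertices of $B$ are weak-simplicial, and since both lie in $X$ they are non-adjacent, which is the desired conclusion. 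The case $\size{Y} \ge 2$ is identical after exchanging the roles of $X$ and $Y$ and invoking instead the hypothesis that no vertex has neighborhood $Y$.

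The only friction I anticipate is bookkeeping rather than a conceptual hurdle. I must treat $\sig E$ as a family indexed by $Y$, so that distinct vertices of $Y$ with equal neighborhoods, or a vertex of $Y$ with empty neighborhood, are carried consistently through the identification $B = \sig I(\sig H)$, and I must be sure the small configurations are genuinely excluded by the hypothesis. The hypothesis plays a double role: on the side chosen as vertices it guarantees $\sig V \notin \sig E$, so that Theorem~\ref{theo:brault} applies cleanly with $\sig H' = \sig H$, and it excludes precisely the single-edge graph on two vertices, the one case in which the two weak-simplicial vertices are forced to be adjacent and the statement would otherwise fail.
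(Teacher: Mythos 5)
Your route is the paper's own: read $B$ as the incidence graph of the hypergraph whose vertex set is one side of the bipartition, then combine Theorem~\ref{theo:hyp}, Theorem~\ref{theo:brault}, and Lemma~\ref{lem:leafweak}. Two of your refinements are sound and in fact more explicit than what the paper writes down: the observation that the two $\beta$-leaves lie on the vertex side of the incidence graph, so their non-adjacency in $B$ is automatic; and the explicit use of the no-dominating-vertex hypothesis to get $\sig V \notin \sig E$, hence $\sig H' = \sig H$ in Theorem~\ref{theo:brault}, together with the disposal of the $K_{1,1}$-type degenerate cases (which is indeed the only configuration the hypothesis must exclude).

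The one place where you are too quick is exactly where the paper spends the second half of its proof: twins. Announcing that you will ``treat $\sig E$ as a family indexed by $Y$'' does not discharge the difficulty, because both Theorem~\ref{theo:hyp} and Theorem~\ref{theo:brault} are stated for hypergraphs in which $\sig E$ is a \emph{set} of subsets of $\sig V$; if two vertices of $Y$ have equal neighborhoods (or several have empty neighborhoods), then $B$ is not the incidence graph of any such hypergraph, and you are invoking the cited theorems outside their hypotheses, with no argument that their proofs tolerate repeated edges. The repair is short but must be carried out: delete from $Y$ all but one representative of each class of vertices with equal neighborhoods, obtaining an induced subgraph $B'$ of $B$, which is still chordal bipartite, still has no vertex $y$ with $N(y) = X$, and has $X$ intact with $\size{X} \ge 2$; apply your argument to $B' = \sig I(\sig H)$ to obtain two distinct weak-simplicial vertices $u, w \in X$ of $B'$; then transfer back, noting that every deleted $y$ satisfies $N_B(y) = N_B(y')$ for some retained $y'$, so the comparability relations among the neighborhoods of the neighbors of any $x \in X$ are the same in $B$ and $B'$, and $u$ and $w$ remain weak-simplicial in $B$. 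This deduplicate-and-transfer step is precisely the twin-contraction paragraph in the paper's proof of Lemma~\ref{lem:twoweak}; with it inserted your argument is complete, and on the twin-free case it is actually more careful than the paper's, which never makes explicit where the hypothesis or the non-adjacency in $B$ enter.
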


\begin{proof}
    From Theorem~\ref{theo:hyp}, Theorem~\ref{theo:brault}, and Lemma~\ref{lem:leafweak}, 
    if $B$ is chordal bipartite and has no twins, then $G$ has at two weak-simplicial vertices which are not adjacent. 
    We now assume $B$ has twins. 
    We construct $B'$ as follows: 
    For each set of twins $T \in \sig T$, 
    remove all vertices in $T$ except one twin of $T$,  
    where $\sig T$ is the sets of twins of $B$. 
    Note that $B'$ is still chordal bipartite since vertex deletion does not destroy the chordality. 
    Since $B'$ has no twins, 
    $B'$ has at least two weak-simplicial vertices $u$ and $v$ which are not adjacent. 
    Since the set inclusion relation between $B$ and $B'$ is same, 
    $u$ and $v$ also weak-simplicial in $B$. 
    Hence, the statement holds. \qed
\end{proof}

\begin{theorem}
\label{theo:eo}
    Let $B$ be a bipartite graph. 
    $B$ is chordal bipartite if and only if there is a \cbeo for $B$. 
\end{theorem}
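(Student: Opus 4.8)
The plan is to prove the two implications separately, reducing the forward direction ($B$ chordal bipartite $\Rightarrow$ a \cbeo exists) to the single existence claim: \emph{every chordal bipartite graph with at least one vertex contains a weak-simplicial vertex}. Granting this claim, I would build the ordering greedily. Delete a weak-simplicial vertex $v_1$ from $B$; since $B - v_1$ is an induced subgraph it is again chordal bipartite (a vertex deletion cannot create a new induced cycle, as any induced cycle of $B - v_1$ is already induced in $B$). Recursing on $B - v_1$ produces a \cbeo of it, and prepending $v_1$ yields a sequence $(v_1, \dots, v_n)$ in which each $v_i$ is weak-simplicial in $B[\setWithRange{V}{i}{n}]$, i.e. a \cbeo for $B$.

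To establish the existence claim I would induct on $\size{V(B)}$, splitting precisely on the hypothesis of Lemma~\ref{lem:twoweak}. If $B$ has no vertex $v$ with $N(v) = X$ or $N(v) = Y$, then Lemma~\ref{lem:twoweak} already supplies a weak-simplicial vertex. Otherwise fix such a vertex, say $v \in Y$ with $N(v) = X$, and apply the induction hypothesis to the chordal bipartite graph $B - v$ to obtain a weak-simplicial vertex $w$ there. The remaining work is to lift $w$ back to $B$: reinserting $v$ leaves every $Y$-neighborhood unchanged and enlarges every $X$-neighborhood by the single vertex $v$, and since $N(v) = X$ contains the neighborhood of every $X$-vertex, neither the independence of $N(w)$ nor the pairwise comparability of its members is destroyed. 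I would verify the two sub-cases explicitly: when $w \in X$ the new neighbor $v$ dominates every other member of $N(w)$ while the old members have unchanged neighborhoods, so all comparabilities survive; and when $w \in Y$ every member of $N(w) \subseteq X$ gains the same new neighbor $v$, so all previous inclusions $N(x) \subseteq N(x')$ are preserved.

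For the converse (a \cbeo exists $\Rightarrow$ $B$ chordal bipartite) I would again induct on $n$. Removing $v_1$ from a \cbeo of $B$ leaves a \cbeo of $B - v_1$, so by induction $B - v_1$ is chordal bipartite; it then suffices to show that attaching the weak-simplicial vertex $v_1$ creates no induced cycle of length six or more. Suppose for contradiction such a cycle $C = (c_0, c_1, \dots, c_{2k-1})$ with $c_0 = v_1$ and $k \ge 3$ existed; since $B - v_1$ has no such cycle, $C$ must pass through $v_1$. Its two cycle-neighbors $c_1$ and $c_{2k-1}$ both lie in $N(v_1)$, so by weak-simpliciality they are comparable, say $N(c_1) \subseteq N(c_{2k-1})$. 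Then $c_2 \in N(c_1) \subseteq N(c_{2k-1})$ forces the edge $\set{c_2, c_{2k-1}}$, which is a chord of $C$ because $k \ge 3$ makes $c_2$ and $c_{2k-1}$ non-consecutive on the cycle, contradicting that $C$ is induced.

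The step I expect to be the crux is the lifting argument in the inductive case of the existence claim, since it is the only place where the dominating vertex excluded by Lemma~\ref{lem:twoweak} must be handled directly; keeping track of which neighborhoods actually change when $v$ is reinserted (only $X$-neighborhoods grow, and only by $v$) is exactly what makes comparability survive. By contrast the backward direction collapses to the clean ``comparable cycle-neighbors force a chord'' observation, and the greedy construction is immediate once the existence claim and the deletion-closure of chordal bipartiteness are in place.
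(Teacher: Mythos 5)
Your proof is correct, and it is tighter than the paper's in one important respect. The paper disposes of the forward direction in a single line, ``From Lemma~\ref{lem:twoweak}, the only if part holds,'' leaving implicit both the greedy recursion (which you spell out via deletion-closure of chordal bipartiteness) and, more seriously, the case excluded by the hypothesis of Lemma~\ref{lem:twoweak}: when some vertex $v$ satisfies $N(v) = X$ or $N(v) = Y$, that lemma does not apply, and such a dominating vertex need not itself be weak-simplicial (take $N(v) = X = \set{x_1, x_2}$ with pendant vertices $y_1$ attached to $x_1$ and $y_2$ attached to $x_2$; then $x_1$ and $x_2$ are incomparable). Your inductive lifting argument---delete $v$, obtain a weak-simplicial $w$ in $B - v$ by induction, and observe that reinserting $v$ only enlarges $X$-neighborhoods, each by the same universally dominating vertex, so comparability survives in both sub-cases $w \in X$ and $w \in Y$---fills exactly this hole. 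For the converse, the paper argues by contraposition: a vertex on an induced cycle of length six or more is never weak-simplicial (its two cycle-neighbors each have a private neighbor on the cycle, hence are incomparable), so the elimination can never exhaust $V$; you instead induct on $n$ and show that prepending a weak-simplicial $v_1$ cannot close a long induced cycle, since comparability of $c_1$ and $c_{2k-1}$ forces the chord $\set{c_2, c_{2k-1}}$, which is non-consecutive precisely because $2k \ge 6$. These are two packagings of the same comparability observation---your chord-forcing step is essentially the contrapositive of the paper's ``cycle vertices are not weak-simplicial''---but your version yields a self-contained induction in both directions, whereas the paper's is shorter at the cost of leaving the recursion and the dominating-vertex case unstated.
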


\begin{proof}
    From Lemma~\ref{lem:twoweak}, the only if part holds. 
    We consider the contraposition of the if part. 
    Suppose that $B$ is not chordal bipartite. 
    Then, $B$ has an induced cycle $C$ with length six or more. 
    Since a vertex in $C$ is not weak-simplicial, 
    we cannot eliminate all vertices from $B$ and the statement holds. \qed
\end{proof}

We next show that a vertex $v$ is weak-simplicial in a bipartite graph $B$ if and only if
$B[N^{1:2}(v)]$ is bipartite chain. 

\begin{lemma}
    \label{lem:ws:bchain}
    Let $B = (X, Y, E)$ be a chordal bipartite graph and $v$ be a vertex in $B$. 
    Then, $v$ is weak-simplicial if and only if an induced subgraph $B[N^{1:2}[v]]$ is bipartite chain. 
\end{lemma}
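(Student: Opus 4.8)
The plan is to set up the bipartition of the induced subgraph $B' = B[N^{1:2}[v]]$ explicitly and then reduce the statement to two symmetric comparability claims. Assume without loss of generality that $v \in X$. Since $B$ is bipartite, every neighbor of $v$ lies in $Y$ and every vertex at distance exactly two from $v$ lies in $X$, so the two sides of $B'$ are $X' = \set{v} \cup N^2(v)$ and $Y' = N(v)$. The single observation driving the whole argument is a \emph{locality} fact: for any $y \in Y'$, every neighbor of $y$ in $B$ is at even distance at most two from $v$, hence lies in $\set{v} \cup N^2(v) = X'$; therefore $N_{B'}(y) = N_B(y)$. In particular, two vertices of $Y'$ are comparable in $B'$ if and only if they are comparable in $B$, so comparability among the neighbors of $v$ transfers freely between $B$ and $B'$.

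For the backward direction, first I would note that in a bipartite graph the condition ``$N(v)$ is independent'' is automatic, since $N(v) \subseteq Y$. Thus $v$ is weak-simplicial precisely when the neighbors of $v$ are pairwise comparable. If $B'$ is bipartite chain then the vertices of $Y' = N(v)$ are pairwise comparable in $B'$, and by the locality fact this is the same as comparability in $B$. Hence $v$ is weak-simplicial, giving the ``if'' part.

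For the forward direction I would verify the two bipartite-chain conditions on $B'$ separately. On the $Y'$ side, weak-simpliciality gives that the neighbors of $v$ are pairwise comparable in $B$, which by locality is comparability in $B'$. On the $X'$ side, $v$ itself is comparable with every other vertex of $X'$ because $N_{B'}(v) = Y'$ contains the $B'$-neighborhood of every vertex of $X'$. The remaining case is a pair $w_1, w_2 \in N^2(v)$, where the key step is to convert a failure of comparability into a violation of weak-simpliciality: if $w_1, w_2$ were incomparable in $B'$ there would be witnesses $y_1 \in N_{B'}(w_1) \setminus N_{B'}(w_2)$ and $y_2 \in N_{B'}(w_2) \setminus N_{B'}(w_1)$, both lying in $Y' = N(v)$. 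Weak-simpliciality forces $y_1$ and $y_2$ to be comparable, say $N(y_1) \subseteq N(y_2)$; but then $w_1 \in N(y_1) \subseteq N(y_2)$ makes $w_1$ adjacent to $y_2$, contradicting $y_2 \notin N_{B'}(w_1)$. Hence $w_1, w_2$ are comparable and $B'$ is bipartite chain.

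I expect the main obstacle to be pinning down the locality observation correctly — that the $B'$-neighborhoods of the $Y'$ vertices coincide with their $B$-neighborhoods — since this is exactly what lets comparability pass between $B$ and the restricted subgraph, and it is easy to overlook that a distance-two vertex may have further neighbors outside $N^{1:2}[v]$ (so the same coincidence fails on the $X'$ side, which is why $X'$ comparability must be handled by the separate witness argument rather than by transfer). I would also remark that chordality of $B$ does not actually seem to be needed for this particular equivalence; it is the ambient hypothesis under which such weak-simplicial vertices are guaranteed to exist via Theorem~\ref{theo:eo} and Lemma~\ref{lem:twoweak}.
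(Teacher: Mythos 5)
Your proof is correct and takes essentially the same approach as the paper's: the heart of the forward direction is the identical witness argument, converting an incomparable pair $w_1, w_2 \in N^2(v)$ into two neighbors of $v$ that would themselves have to be incomparable, contradicting weak-simpliciality, while the backward direction reads off comparability of $N(v)$ from the chain condition. Writing $B' = B[N^{1:2}[v]]$, you are more careful than the paper's terse proof in making explicit the locality fact that $N_{B'}(y) = N_B(y)$ for every $y \in N(v)$ (which justifies transferring comparability between $B$ and $B'$), the trivial comparabilities involving $v$ itself on the $X'$ side, the automatic independence of $N(v)$ in a bipartite graph, and the accurate observation that chordality of $B$ is never actually used in this equivalence.
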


\begin{proof}
    We assume that $B[N^{1:2}[v]]$ is bipartite chain. 
    From the definition, any pair of vertices in $N(v)$ are comparable. 
    Hence, $v$ is weak-simplicial. 
    
    We next prove the other direction. 
    We assume that $v$ is weak-simplicial. 
    Let $x$ and $y$ be vertices in $N^2(v)$. 
    If $x$ and $y$ are incomparable in $B[N^{\le 2}[v]]$, 
    then there are two vertices $z \in N(x)\setminus N(y)$ and $z' \in N(y) \setminus N(x)$. 
    Note that $z$ and $z'$ are neighbors of $v$. 
    This contradicts that any pair of vertices in $N(v)$ are comparable. 
    Hence, $x, y \in N^2(v)$ are comparable and $B[N^{\le2}[v]]$ is bipartite chain. \qed
\end{proof}

\section{Enumeration of Chordal Bipartite Induced Subgraphs}
\label{sec:algo}

\begin{algorithm}[t]
\label{algo:propose}
  \caption{\texttt{ECB} enumerates all chordal bipartite induced subgraphs in
  amortized polynomial time.}
  \Procedure(\tcp*[f]{$G = (V, E)$: an input graph}){\EnumCB{$G$}}{
    \RecECB{$(\emptyset, \emptyset, V, G)$}\;
  }
  \Procedure{\RecECB{$X, WS(X), AWS(X), G$}}{
    Output $X$ and Compute $\cand{X}$ from $AWS(X)$\;
    \For{$v \in \cand{X}$}{
         $Y \gets X \cup \set{v}$\;
        \lIf{$\Parent{Y} = X$\label{algo:ws}}{
            \RecECB{$Y, WS(Y), AWS(Y), G$}\label{algo:candaws}
        }
    }
  }
\end{algorithm}

In this section, we propose an enumeration algorithm \EnumCB for chordal bipartite subgraphs based on reverse search~\cite{Avis:Fukuda:DAM:1996}. 
\EnumCB enumerates all solutions by traversing on a tree structure $\Fam{G} = (\sig S(G), \sig E(G))$, 
called a \name{family tree}, 
where $\sig S(G)$ is a set of solutions in an input graph $G$ and $\sig E(G) \subseteq \sig S(G) \times \sig S(G)$. 
Note that $\Fam{G}$ is directed. 
We give $\sig E(G)$  by defining the parent-child relationship among solutions based on Theorem~\ref{theo:eo}. 

Let $X$ be a vertex subset that induces a solution. 
We denote the set of weak-simplicial vertices in $G[X]$ as $WS(X)$.
In what follows, we number the vertex index from $1$ to $n$ and compare the vertices with their indices. 
The \name{parent} of $X$ is defined as $\Parent{X} = X \setminus \argmax{WS(X)}$. 
$X$ is a \name{child} of $Y$ if $\Parent{X} = Y$. 
Let $ch(X)$ be the set of children of $X$. 
We define the \name{parent vertex $pv(X)$} as $\argmax{WS(X)}$ which induces the parent. 
For any pair of solutions $X$ and $Y$, 
$(X, Y) \in \sig E(G)$ if $Y = \Parent{X}$. 
From Theorem~\ref{theo:eo}, any solution can reach the empty set by recursively removing the parent vertex from the solution.  
Hence, the following lemma holds. 

\begin{lemma}
\label{lem:ftree}
    The family tree forms a tree.
\end{lemma}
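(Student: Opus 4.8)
```latex
The plan is to show that the \emph{family tree} $\Fam{G}$ is indeed a tree, which amounts to verifying three things: (i) every nonempty solution has exactly one parent, (ii) the parent of a solution is itself a solution, and (iii) repeatedly taking parents eventually reaches the unique root $\emptyset$, so that no directed cycles arise. I would organize the argument around the map $X \mapsto \Parent{X}$ and argue that it is a well-defined function whose iteration terminates at $\emptyset$, which is exactly what characterizes a rooted tree on the solution set.

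First I would establish that $\Parent{X}$ is well defined for every nonempty solution $X$. By Theorem~\ref{theo:eo}, a bipartite graph is chordal bipartite if and only if it admits a \cbeo; in particular $G[X]$, being chordal bipartite, has at least one weak-simplicial vertex, so $WS(X) \neq \emptyset$. Since vertices are totally ordered by their indices, $\argmax{WS(X)}$ picks out a single vertex $pv(X)$, and hence $\Parent{X} = X \setminus \set{pv(X)}$ is uniquely determined. This gives (i): each $X \neq \emptyset$ has exactly one parent.

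Next I would verify (ii), that $\Parent{X}$ induces a solution. Here I would use the fact that chordal bipartiteness is a hereditary (induced-subgraph-closed) property: deleting the vertex $pv(X)$ from the chordal bipartite graph $G[X]$ leaves an induced subgraph $G[\Parent{X}]$ that is again chordal bipartite, hence again a solution. This is the same observation already invoked in the proof of Lemma~\ref{lem:twoweak} (``vertex deletion does not destroy the chordality''), so it requires no new work.

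Finally, for (iii) and to conclude that the structure is a tree rather than merely a functional graph, I would note that $\size{\Parent{X}} = \size{X} - 1$, so each parent step strictly decreases the number of vertices. Iterating therefore terminates after finitely many steps, and the only solution from which no further parent can be taken is $\emptyset$; thus every solution is connected to $\emptyset$ by a unique directed path of parent edges. A strictly decreasing potential also rules out directed cycles. Combining the unique-parent property with reachability of a single root yields that $(\sig S(G), \sig E(G))$ is a tree rooted at $\emptyset$. The main obstacle, such as it is, is conceptual rather than technical: one must be careful that $WS(X)$ is guaranteed nonempty so that $pv(X)$ exists, and this is precisely where Theorem~\ref{theo:eo} does the essential work; everything else reduces to the heredity of chordal bipartiteness and the strict size decrease.
```
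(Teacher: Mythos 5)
Your proof is correct and takes essentially the same approach as the paper, whose (one-sentence) justification likewise invokes Theorem~\ref{theo:eo} to conclude that every solution reaches $\emptyset$ by repeatedly removing the parent vertex. You simply make explicit the steps the paper leaves implicit: nonemptiness of $WS(X)$ (hence well-definedness of $pv(X)$), heredity of chordal bipartiteness so that $\Parent{X}$ is again a solution, and the strict size decrease that rules out cycles and forces convergence to the root.
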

 
Next, we show that \EnumCB enumerates all solutions. 
For any vertex subset $X \subset V$, 
we denote $\setWithRange{X}{1}{i} = X \cap \setWithRange{V}{1}{i}$. 
An \name{addible weak-simplicial vertex set} is $AWS(X) = \inset{v \in V \setminus X}{v \in WS(X \cup \set{v})}$, 
that is, any vertex $v$ in $AWS(X)$ generates new solution $X \cup \set{v}$. 
We define a \name{candidate set} $\cand{X}$ as follows: 
$\cand{X} = \setWithRange{AWS(X)}{pv(X)}{n} \cup (AWS(X) \cap N^{1:2}(pv(X)))$. 
Note that $\cand{X}$ is a subset of $AWS(X)$. 
We show that the relation between $ch(X)$ and $\cand{X}$. 

\begin{lemma}
\label{lem:enum}
    Let $X$ and $Y$ be distinct solutions. 
    If $Y$ is a child of $X$, then $pv(Y) \in \cand{X}$. 
\end{lemma}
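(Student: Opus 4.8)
The plan is to write $v := pv(Y)$ and show that $v$ falls into one of the two parts of $\cand{X}$ according to how its index compares with that of $pv(X)$. First I would record the facts that come for free from $Y$ being a child of $X$: since $\Parent{Y} = X$ we have $Y = X \cup \set{v}$ with $v \notin X$, and since $Y$ is a solution with $v = pv(Y) \in WS(Y)$, the definition of the addible set gives $v \in AWS(X)$ at once. It then remains only to place $v$ in $\setWithRange{AWS(X)}{pv(X)}{n}$ or in $AWS(X) \cap N^{1:2}(pv(X))$.

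The easy branch is when the index of $v$ is at least that of $pv(X)$: then $v \in \setWithRange{AWS(X)}{pv(X)}{n} \subseteq \cand{X}$ and we are done. The substantive branch is when $v$ has strictly smaller index than $pv(X)$, and here I would first exploit the maximality defining $pv(Y)$. Since $v = \argmax WS(Y)$ while $pv(X) \in X \subseteq Y$ has a larger index than $v$, the vertex $pv(X)$ cannot lie in $WS(Y)$; otherwise it would outrank $v$ in the $\argmax$. Thus $pv(X)$ is weak-simplicial in $G[X]$ but loses this property once the single vertex $v$ is inserted, and I would use this to deduce $v \in N^{1:2}(pv(X))$, which together with $v \in AWS(X)$ places $v$ in the second part of $\cand{X}$.

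The heart of the argument — and the step I expect to be the main obstacle — is showing that inserting one vertex can destroy the weak-simpliciality of $pv(X)$ only if that vertex lies within distance two of $pv(X)$. I would split on whether $v$ is adjacent to $pv(X)$. If it is, then trivially $v \in N^{1:2}(pv(X))$. If it is not, then $N(pv(X))$ is unchanged by the insertion and remains independent, so the failure of weak-simpliciality in $G[Y]$ must come from some pair of neighbours $z, z'$ of $pv(X)$ that are comparable in $G[X]$ but incomparable in $G[Y]$. Because the neighbourhoods of $z$ and $z'$ change between $G[X]$ and $G[Y]$ only by the possible addition of $v$, this flip to incomparability can occur only if $v$ is adjacent to at least one of $z, z'$; choosing such a neighbour $z \in N(pv(X))$ produces the path $v\,z\,pv(X)$ and hence $v \in N^{1:2}(pv(X))$. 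The care needed here is the comparability bookkeeping: one must check that a strict containment on the $G[X]$ side is what actually becomes incomparable in $G[Y]$ (an equality of neighbourhoods would survive the insertion), and note that the witnessing path of length at most two lies entirely inside $G[Y]$, so the distance bound transfers to $G$ because $G[Y]$ is an induced subgraph and distances do not decrease under taking induced subgraphs.
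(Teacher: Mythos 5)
Your proof is correct and follows essentially the same route as the paper's: establish $v \in AWS(X)$, split on the index comparison with $pv(X)$, use the maximality of $pv(Y)$ to conclude $pv(X) \notin WS(Y)$, and derive $v \in N^{1:2}(pv(X))$ from the fact that adding the single vertex $v$ is what destroys comparability of some pair of neighbours of $pv(X)$. The only difference is that you spell out in full the distance-two step (adjacency case split, preserved independence, and the transfer of the distance bound to $G$) that the paper compresses into one sentence.
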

\begin{proof}
    Suppose that $Y$ is a child of $X$. 
    Let $v = pv(Y)$ and $u = pv(X)$. 
    Note that $v$ belongs to $AWS(X)$. 
    If $u < v$, then $v \in \setWithRange{AWS}{u}{n}(X)$ and thus $v \in \cand{X}$. 
    Otherwise, $u$ is not included in $WS(Y)$ since $v$ has the maximum index in $WS(Y)$. 
    From the definition of a weak-simplicial vertex, 
    there are two vertices in $N_{Y}(u)$ which are incomparable in $G[Y]$. 
    Since $\sig N(u)$ is totally ordered in $G[X]$, $v$ must be in $N^{1:2}(u)$. 
    Hence, the statement holds. \qed
\end{proof}

In what follows, 
we call a vertex $v \in \cand{X}$ \name{generates a child} if $X \cup \set{v}$ is a child of $X$. 
From Lemma~\ref{lem:ftree} and Lemma~\ref{lem:enum}, 
\EnumCB enumerates all solution by the DFS traversing on $\Fam{G}$. 

\begin{theorem}
\label{theo:enum}
    \EnumCB enumerates all solutions. 
\end{theorem}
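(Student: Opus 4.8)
The plan is to argue by the standard reverse-search template: I would show that \EnumCB performs a depth-first traversal of the family tree $\Fam{G}$ rooted at $\emptyset$, visiting every node exactly once. Since Lemma~\ref{lem:ftree} already establishes that $\Fam{G}$ is a tree whose nodes are exactly the solutions, it suffices to prove three things: (i) every set passed to \RecECB is a solution; (ii) every solution is reached by some call to \RecECB; and (iii) no solution is reached twice. The initial call \RecECB$(\emptyset, \emptyset, V, G)$ supplies the root, and each call outputs its set before looping, so ``visited'' and ``output'' coincide.

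For soundness (i), I would induct on the recursion depth. The empty set is trivially chordal bipartite. For the inductive step, the algorithm recurses on $Y = X \cup \set{v}$ only for $v \in \cand{X}$, and since $\cand{X} \subseteq AWS(X)$, the vertex $v$ is weak-simplicial in $G[Y]$; combined with the inductive hypothesis that $G[X]$ is chordal bipartite, Theorem~\ref{theo:eo} applied to the \cbeo of $G[X]$ prepended with $v$ shows that $G[Y]$ is chordal bipartite, so $Y$ is again a solution.

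For completeness (ii), I would induct on the depth of a solution $Y$ in $\Fam{G}$. The depth-$0$ node is $\emptyset$, handled by the initial call. For a solution $Y$ at depth $d+1$, let $X = \Parent{Y}$ be its unique parent at depth $d$; by the inductive hypothesis \RecECB is called on $X$. Writing $v = pv(Y)$, Lemma~\ref{lem:enum} guarantees $v \in \cand{X}$, so the loop over $\cand{X}$ reaches $v$; at that iteration $X \cup \set{v} = Y$, and since $\Parent{Y} = X$ the guard on line~\ref{algo:ws} succeeds and \RecECB is invoked on $Y$. The same guard yields (iii): because $\Parent{\cdot}$ is a function, each solution has a single parent, so $Y$ is entered only from the iteration $v = pv(Y)$ of the call on its unique parent, and distinct candidates within one call produce distinct children; hence no solution is output twice.

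The main obstacle is not (i) or (iii), which are bookkeeping, but ensuring completeness is airtight: I must be certain that $\cand{X}$ captures the parent vertex of every child of $X$, since $\cand{X}$ is a strict subset of $AWS(X)$ pruned by the index and distance conditions. This is precisely the content of Lemma~\ref{lem:enum}, so the real difficulty has already been isolated there; the remaining task is to splice that lemma into the tree-induction and confirm that the guard on line~\ref{algo:ws} neither admits a non-child nor rejects a genuine child.
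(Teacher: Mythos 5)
Your proposal is correct and takes essentially the same approach as the paper: the paper establishes this theorem in one line by citing Lemma~\ref{lem:ftree} and Lemma~\ref{lem:enum} and appealing to the DFS traversal of $\Fam{G}$, which is exactly the reverse-search induction you spell out (your parts (ii) and (iii)). Your extra soundness step (i) is a reasonable addition the paper leaves implicit, with one small gloss: applying Theorem~\ref{theo:eo} to $G[X \cup \set{v}]$ tacitly requires that this graph is bipartite, which does hold because independence and pairwise comparability of $N(v)$ in $G[X \cup \set{v}]$ rule out any odd cycle through $v$, but deserves a sentence.
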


\section{Time complexity analysis}
\begin{algorithm}[t]
  \caption{\texttt{UpdateWS} and \texttt{UpdateAWS} are efficient functions to compute $WS(Y)$ and $AWS(Y)$ by using $WS(X)$ and $AWS(X)$, respectively. }
  \Procedure{\UpdateWS{$X, v, WS(X), G$}}{
    $WS \gets WS(X)$\;
    \For{$u \in N_{X}(v)$}{
        \lIf{$u \in WS \land$ there is a vertex $w \in N_X(u)$ is incomparable to $u$. }{$WS \gets WS \setminus \set{u}$}
        \For{$w \in N_{X}(u) \cap WS$}{
            \lIf{$w$ and $v$ are incomparable}{
                $WS \gets WS \setminus \set{w}$
            }
        }
    }
    \Return $WS$\;
  }

  \Procedure{\UpdateAWS{$X, v, AWS(X), G$}}{
    $AWS \gets AWS(X)$\;
    \For{$u \in N(v)$}{
        \If{$u \in AWS$}{
            \lIf{There is a vertex $w \in N_X(u)$ which is incomparable to $u$. }{$AWS \gets AWS \setminus \set{u}$}    
        }
        \ElseIf{$u \in X$}{
            \For{$w \in N(u) \cap AWS$}{
                \lIf{$w$ and $v$ are incomparable. }{
                    $AWS \gets AWS \setminus \set{w}$
                }
            }
        }
    }
    \Return $AWS$\;
  }
\end{algorithm}

\begin{figure}[t]
    \centering
    \includegraphics[width=0.8\textwidth]{./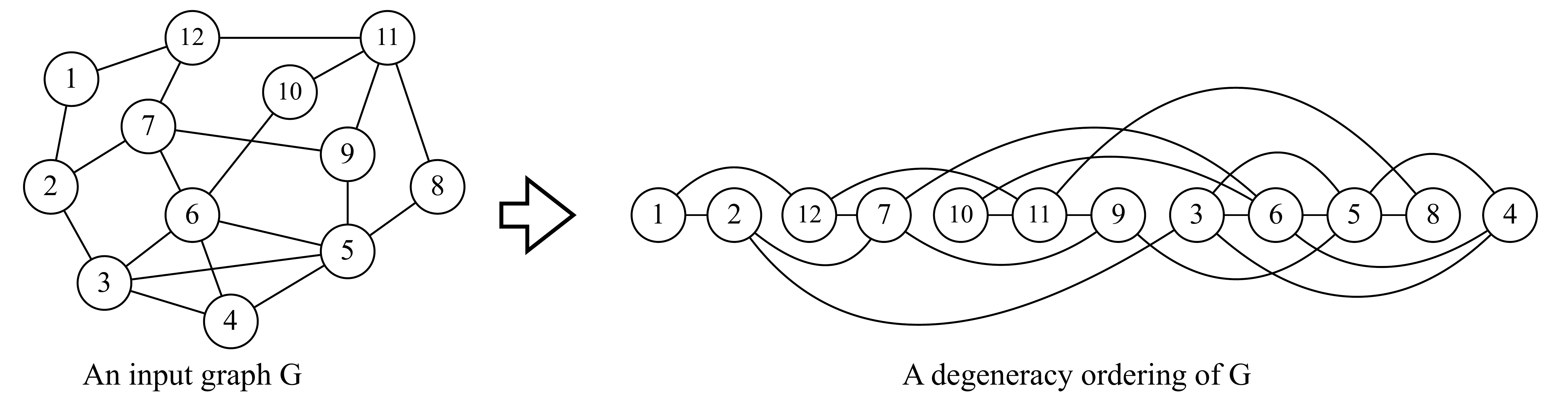}
    \caption{It is a degeneracy ordering of $G$. The degeneracy of $G$ is three. In this ordering, 
    $\size{\setWithRange{N^{\le2}}{1}{v}(v)}$ is at most $k\Delta$ for any vertex $v$. }
    \label{fig:dege}
\end{figure}
\EnumCB has two bottlenecks. 
(1) Some vertices in $\cand{X}$ does not generate a child and 
(2) the maintenance of $WS(X)$ and $AWS(X)$ consumes time.  
A trivial bound of the number of redundant vertices in $\cand{X}$ is $\order{\Delta^2}$ since 
only vertices in $(AWS(X) \cap N^{1:2}(pv(X))$ may not generate a child. 
To evaluate the number of such redundant vertices precisely, 
we use a \name{degeneracy ordering}. 
A graph $G$ is \name{$k$-degenerate} if any induced subgraph of $G$ has a vertex with degree $k$ or less~\cite{Lick:CJM:1970}. 
The \name{degeneracy} of a graph is the smallest such number $k$. 
Matula \etal~\cite{Matula:Beck:JACM:1983} show that 
a $k$-degenerate graph $G$ has a following vertex ordering: 
For each vertex $v$, the number of neighbors smaller than $v$ is at most $k$. 
This ordering is called a \name{degeneracy ordering} of $G$ (See Fig.~\ref{fig:dege}). 
Matula \etal also show that a linear time algorithm for obtaining one of the orderings. 
In what follows, 
we fix a reverse order of a degeneracy ordering and $WS(X)$ and $AWS(X)$ are sorted in this ordering. 
We first show that the number of redundant vertices is at most $2k\Delta$. 

\begin{lemma}
    \label{lem:fail:total}
    Let $X$ be a solution. 
    The number of vertices in $\cand{X}$ which do not generate a child is at most $2k\Delta$. 
\end{lemma}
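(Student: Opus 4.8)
The plan is to bound the redundant vertices in $\cand{X}$ by examining the two ways a vertex can enter $\cand{X}$ and showing that the ``failure'' to generate a child is localized near the parent vertex $pv(X)$. Recall $\cand{X} = \setWithRange{AWS(X)}{pv(X)}{n} \cup (AWS(X) \cap N^{1:2}(pv(X)))$. A vertex $v \in \cand{X}$ fails to generate a child precisely when $X \cup \set{v}$ is not a child of $X$, i.e., $pv(X \cup \set{v}) \neq pv(X)$; since $v \in AWS(X)$ always makes $X \cup \set{v}$ a solution, the only obstruction is that adding $v$ changes the maximum-index weak-simplicial vertex, which can only happen if $v$ interferes with the weak-simpliciality of some vertex of index higher than $pv(X)$. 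First I would argue that any vertex $v$ in the first part $\setWithRange{AWS(X)}{pv(X)}{n}$ that has index larger than $pv(X)$ automatically generates a child, since then $v$ itself becomes $pv(X \cup \set{v})$. Hence the genuinely redundant vertices must lie in $AWS(X) \cap N^{1:2}(pv(X))$, reducing the whole count to bounding this latter set.

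The key step is then to bound $\size{N^{1:2}(pv(X)) \cap AWS(X)}$ using the degeneracy ordering. Writing $u = pv(X)$, I would split $N^{1:2}(u)$ into $N(u)$ and $N^2(u)$. The neighbors $N(u)$ number at most $\Delta$, which is already within the claimed bound, so the crux is $N^2(u)$. Here the reverse-degeneracy ordering is essential: as suggested by Fig.~\ref{fig:dege}, I would invoke the property that for each vertex the number of neighbors preceding it is at most $k$, and combine this with the fact that candidate vertices have index at least $pv(X)$ (from the first term of $\cand{X}$) or are reached through a bounded-degree two-step neighborhood. The intended estimate is $\size{\setWithRange{N^{\le 2}}{1}{v}(v)} \le k\Delta$: each vertex $w$ at distance two from $u$ is reached via an intermediate neighbor, and counting two-step paths where the appropriate endpoint precedes $u$ in the degeneracy ordering yields at most $k$ choices at one step times $\Delta$ at the other. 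Putting the $N(u)$ contribution together with the $N^2(u)$ contribution and accounting for the symmetric counting over both sides of the bipartition gives the factor $2k\Delta$.

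The main obstacle I anticipate is pinning down exactly which endpoint of a two-step path the degeneracy bound applies to, so that the product comes out as $k\Delta$ rather than $\Delta^2$. The naive bound on $N^2(u)$ is $\Delta^2$ (each of up to $\Delta$ neighbors contributing up to $\Delta$ second neighbors), so the entire improvement hinges on using the ordering to replace one factor of $\Delta$ by $k$. Concretely, I would need to verify that we only ever need to count those second neighbors $w$ for which a suitable adjacency respects the degeneracy ordering — for instance, that the relevant intermediate vertex precedes one of the endpoints — and that the definition of $\cand{X}$, which restricts to indices at least $pv(X)$, guarantees this orientation. Once that orientation is established, the count $\sum_{u' \in N(u)} \size{\setWithRange{N}{1}{u'}(u')} \le \Delta \cdot k$ follows directly from the degeneracy property, and the lemma closes by adding the $\Delta$ direct neighbors and doubling.
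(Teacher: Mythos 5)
Your reduction in the first paragraph is the same as the paper's: every candidate $v$ with $v > pv(X)$ generates a child, so the redundant vertices lie among the candidates with index smaller than $p = pv(X)$, which by the definition of $\cand{X}$ are contained in $AWS(X) \cap N^{1:2}(p)$; the problem then becomes bounding the number of vertices of index less than $p$ at distance at most two from $p$. The genuine gap is in how you close that distance-two count. A failing vertex $w < p$ is reached through an intermediate $u' \in N(p) \cap N(w)$, and the position of $u'$ relative to $w$ is not under your control: the definition of $\cand{X}$ constrains the index of $w$ relative to $p$ only, so the ``orientation'' you say you would need to verify --- that the relevant intermediate vertex follows the endpoint being counted --- simply fails in the configuration $u' < w < p$. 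In that case $w \notin \setWithRange{N}{1}{u'}(u')$, so your single sum $\sum_{u' \in N(p)} \size{\setWithRange{N}{1}{u'}(u')} \le k\Delta$ does not count $w$ at all. The paper closes this by a case split on the intermediate vertex. If $u' < p$, then $u' \in \setWithRange{N}{1}{p}(p)$, of which there are at most $k$ by the degeneracy property, and one charges \emph{all} of $N(u')$ (at most $\Delta$ each), giving one $k\Delta$ term with no assumption on the order of $w$ versus $u'$. If $u' > p$, then $w < p < u'$ forces $w \in \setWithRange{N}{1}{u'}(u')$, and your sum applies, giving the second $k\Delta$ term over at most $\Delta$ intermediates. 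The factor $2$ in $2k\Delta$ is exactly the sum of these two case counts.

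Relatedly, your stated source of the doubling --- ``symmetric counting over both sides of the bipartition'' --- is wrong: the input graph $G$ is an arbitrary graph (only the solutions are bipartite), and even inside a solution the dichotomy that yields $2k\Delta$ is the position of the intermediate vertex in the fixed degeneracy ordering relative to $p$, not the bipartition class of anything. So what is missing is not a refinement of your sum but a second, independent count; once the case analysis on the intermediate vertex is added, your argument coincides with the paper's proof.
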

\begin{proof}
    Let $v$ be a vertex in $\cand{X}$ and $p$ be a vertex $pv(X)$. 
    If $p < v$, then $v$ generates a child. 
    We assume that $v < p$. 
    Since $v$ is in $\cand{X}$, $v \in N^{1:2}(p) \cap \setWithRange{AWX}{1}{p}(X)$. 
    We estimate the size of $N^{1:2}(p) \cap \setWithRange{AWX}{1}{p}(X)$. 
    We consider a vertex $u \in \setWithRange{N}{1}{v}(v)$. 
    $\sum_{u \in \setWithRange{N}{1}{v}(v)}\size{N(u)}$ is at most $k\Delta$
    since $\size{\setWithRange{N}{1}{v}(v)}$ is at most $k$.
    We next consider a vertex $u \in \setWithRange{N}{v}{n}(v)$.
    Since $u$ is larger than $v$, a vertex in $\setWithRange{N}{u}{n}(u)$ is larger than $v$. 
    Hence, we consider vertices $\setWithRange{N}{1}{u}(u)$. 
    For each $u$, $\size{\setWithRange{N}{1}{u}(u)}$ is at most $k$. 
    Hence, $\sum_{u \in \setWithRange{N}{v}{n}(v)} \size{\setWithRange{N(u)}{1}{u}}$ is at most $k\Delta$ and the statement holds. 
    \qed    
\end{proof}

We next show how to compute $\cand{Y}$ from $\cand{X}$, where $X$ is a solution and $Y$ is a child of $X$. 
From the definition of $\cand{X}$, 
we can compute $\cand{Y}$ in $\order{\size{\cand{Y}} + k\Delta}$ time if we have $AWS(Y)$ and $pv(Y)$. 
Moreover, if we have $WS(X \cup \set{v})$, 
then we can determine whether $X \cup \set{v}$ is a child of $X$ or not in constant time since $WS(X \cup \set{v})$ is sorted. 
Hence, to obtain children of $X$, 
computing $AWS(X)$ and $WS(X)$ dominate the computation time of each iteration. 
Here, we define two vertex sets as follows: 
\begin{align*}
    \Delws{X}{v}  &= \inset{u \in N^{1:2}(v) \cap WS(X)}{u \notin WS(X \cup \set{v})} \text{ and }\\
    \Delaws{X}{v} &= \inset{u \in N^{1:2}(v) \cap AWS(X)}{u \notin WS(X \cup \set{u, v})}. 
\end{align*}
These vertex sets are the sets of vertices that are removed from $WS(X)$ and $AWS(X)$ after adding $v$ to $X$, respectively. 
In the following lemmas, we show that $WS(X)$ and $AWS(X)$ can be update if we have $\Delws{X}{v}$ and $\Delaws{X}{v}$. 

\begin{lemma} 
    \label{lem:ws}
    Let $X$ be a solution, $Y$ be a child of $X$, and $v = pv(Y)$. 
    Then, $WS(Y) = (WS(X) \setminus \Delws{X}{v}) \cup \set{v}$.  
\end{lemma}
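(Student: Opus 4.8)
The plan is to prove the set equality by the two inclusions, after first reducing the claim to a statement about the vertices of $X$. Since $Y$ is a child of $X$ with $v = pv(Y)$, by definition $\Parent{Y} = Y \setminus \argmax{WS(Y)} = X$, so $Y = X \cup \set{v}$ and $v = \argmax{WS(Y)} \in WS(Y)$. Because $WS(Y)$ is a set of vertices of $G[Y] = G[X \cup \set{v}]$, it decomposes as $WS(Y) = (WS(Y) \cap X) \cup \set{v}$, and the lemma reduces to showing $WS(Y) \cap X = WS(X) \setminus \Delws{X}{v}$. The conceptual tool I would isolate first is \emph{locality}: whether a vertex $u$ is weak-simplicial in an induced subgraph $G[Z]$ depends only on $G[Z \cap N^{1:2}[u]]$, since the neighbors of $u$ lie in $N^1(u)$ and the comparisons $N_Z(a) \subseteq N_Z(b)$ between two neighbors $a,b$ of $u$ only involve vertices within distance two of $u$. (In the bipartite setting the independence condition on $N(u)$ is automatic, so weak-simpliciality is entirely a comparability condition, which makes this locality clean.) Consequently, adding $v$ to $X$ can change the status of $u$ only when $v \in N^{1:2}[u]$, i.e. when $u = v$ or $u \in N^{1:2}(v)$.

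For one inclusion, I would use locality to handle the vertices far from $v$ and the definition of $\Delws{X}{v}$ to handle the close ones. If $u \in X$ has $dist(u,v) \ge 3$, then $v \notin N^{1:2}[u]$, so $G[Y \cap N^{1:2}[u]] = G[X \cap N^{1:2}[u]]$ and $u \in WS(X) \iff u \in WS(Y)$; such vertices lie in neither $\Delws{X}{v}$ (which is contained in $N^{1:2}(v)$) and are simply carried over. For $u \in WS(X) \cap N^{1:2}(v)$, the definition $\Delws{X}{v} = \inset{u \in N^{1:2}(v) \cap WS(X)}{u \notin WS(X \cup \set{v})}$ says exactly that $u$ survives into $WS(Y)$ iff $u \notin \Delws{X}{v}$. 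Combining the two cases gives $WS(X) \setminus \Delws{X}{v} \subseteq WS(Y) \cap X$ and, reading the equivalences the other way, that every vertex of $WS(Y) \cap X$ that already belonged to $WS(X)$ is accounted for.

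The hard part, and the only genuinely nontrivial direction, is to rule out newly created weak-simplicial vertices: I must show no $u \in X \setminus WS(X)$ enters $WS(Y)$, since the formula provides no mechanism for additions. This is a monotonicity argument. Suppose $u$ is not weak-simplicial in $G[X]$, witnessed by two neighbors $a,b \in N_X(u)$ that are incomparable, say via $z \in N_X(a) \setminus N_X(b)$ and $z' \in N_X(b) \setminus N_X(a)$. Since $a,b,z,z' \in X$ and $v \notin X$, all four persist in $G[Y]$, and because passing from $X$ to $Y$ only enlarges neighborhoods by possibly adding $v$, we have $z \in N_Y(a)$ while $z \notin N_Y(b)$ (as $z \neq v$ and $z \notin N_X(b)$), and symmetrically for $z'$. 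Thus $a$ and $b$ remain incomparable in $G[Y]$, so $u$ stays non-weak-simplicial. Hence $WS(Y) \cap X \subseteq WS(X)$, which together with the previous inclusion yields $WS(Y) \cap X = WS(X) \setminus \Delws{X}{v}$, and adding the vertex $v$ established above completes the proof.
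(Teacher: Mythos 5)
Your proof is correct and takes essentially the same route as the paper's: both reduce the claimed equality to the definition of $\Delws{X}{v}$ and verify the two inclusions, with $v$ handled separately as $pv(Y) \in WS(Y)$. The only difference is one of explicitness — you prove the two facts the paper uses tacitly, namely locality (adding $v$ can change the status only of vertices in $N^{1:2}[v]$, which is what makes the step ``$u \in WS(X)$, $u \notin WS(Y)$ implies $u \in \Delws{X}{v}$'' legitimate despite the restriction $u \in N^{1:2}(v)$ in the definition) and monotonicity via persistence of incomparability witnesses (which the paper compresses into the unproved assertion that $u \in WS(Y)$ implies $u \in WS(X)$).
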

\begin{proof}
    Let $u$ be a vertex in $WS(Y)$. 
    We prove $u$ is included in $(WS(X) \setminus \Delws{X}{v}) \cup \set{v}$. 
    If $u = v$ holds, then $u \in WS(Y)$ since $v \in pv(Y)$. 
    We assume that $u \neq v$. 
    Since $u$ is weak-simplicial in $Y$, $u \in WS(X)$. 
    If $u \in \Delws{X}{v}$, then $u$ is not weak-simplicial in $Y$. 
    It contradicts the assumption. Hence, $u \not\in \Delws{X}{v}$. 
    We prove the other direction. 
    Let $u$ be a vertex in $(WS(X) \setminus \Delws{X}{v}) \cup \set{v}$. 
    We assume that $u \neq v$. If $u \not\in WS(Y)$, then $u \in \Delws{X}{v}$. 
    It is contradiction and the statement holds. 
\end{proof}

\begin{lemma}
    \label{lem:aws}
    Let $X$ be a solution, $Y$ be a child of $X$, and $v = pv(Y)$. 
    Then, $AWS(Y) = AWS(X) \setminus \Delaws{X}{v}$.
\end{lemma}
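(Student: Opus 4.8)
The plan is to prove set equality $AWS(Y) = AWS(X) \setminus \Delaws{X}{v}$ by establishing both inclusions, mirroring the structure of the proof of Lemma~\ref{lem:ws} for $WS$. The underlying intuition is that adding the parent vertex $v = pv(Y)$ to $X$ can only \emph{destroy} membership in the addible set, never create it: if a vertex $u$ was not addible to $X$ (i.e.\ $X \cup \set{u}$ was not a solution because $u$ failed to be weak-simplicial there), then the larger graph $Y \cup \set{u} = X \cup \set{u, v}$ still contains the obstruction witnessing the failure, so $u$ cannot become addible. This monotonicity is exactly why the update is a pure deletion, with no symmetric $\cup \set{\cdot}$ term as appeared in the $WS$ case.

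First I would prove $AWS(Y) \subseteq AWS(X) \setminus \Delaws{X}{v}$. Take $u \in AWS(Y)$, so $u \in WS(Y \cup \set{u}) = WS(X \cup \set{u, v})$. I would argue that $u \in AWS(X)$: since $u$ is weak-simplicial in $G[X \cup \set{u, v}]$, its neighbors there are pairwise comparable, and restricting to the induced subgraph $G[X \cup \set{u}]$ (deleting $v$) cannot introduce an independence violation or an incomparable pair among $N(u)$, because $\sig N(u)$ remains totally ordered under vertex deletion by Lemma~\ref{lem:ws:bchain}. Hence $u \in WS(X \cup \set{u})$, i.e.\ $u \in AWS(X)$. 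Then $u \notin \Delaws{X}{v}$ follows directly from the definition, since $u \in WS(X \cup \set{u, v})$ contradicts the membership condition $u \notin WS(X \cup \set{u,v})$ that defines $\Delaws{X}{v}$.

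For the reverse inclusion $AWS(X) \setminus \Delaws{X}{v} \subseteq AWS(Y)$, I would take $u \in AWS(X)$ with $u \notin \Delaws{X}{v}$ and show $u \in AWS(Y)$, i.e.\ $u \in WS(X \cup \set{u, v})$. The two ways $u$ can fail to be in $\Delaws{X}{v}$ are: either $u \notin N^{1:2}(v)$, or $u \in WS(X \cup \set{u, v})$. In the latter case we are immediately done. In the former case, $u$ is at distance more than $2$ from $v$, so adding $v$ to $X \cup \set{u}$ does not touch $N^{1:2}(u)$; by Lemma~\ref{lem:ws:bchain} weak-simpliciality of $u$ depends only on $B[N^{1:2}[u]]$, which is unchanged, so $u$ remains weak-simplicial, giving $u \in WS(X \cup \set{u, v}) = WS(Y \cup \set{u})$ and hence $u \in AWS(Y)$.

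The main obstacle I anticipate is the careful bookkeeping in the forward inclusion, specifically the claim that deleting $v$ cannot break the weak-simpliciality of $u$. One must verify both conditions in the definition of weak-simplicial (that $N(u)$ stays independent and that all pairs of neighbors of $u$ stay comparable) survive passing to an induced subgraph. Independence is trivially inherited, and comparability requires the observation that deleting a vertex $v$ from the graph can only shrink neighborhoods, so a total order on $\sig N(u)$ in the larger graph induces a (still total) order in the smaller one; invoking Lemma~\ref{lem:ws:bchain} to reduce everything to the local bipartite chain condition on $B[N^{1:2}[u]]$ is the cleanest way to make this rigorous and to handle the distance-$2$ case symmetrically. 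I would lean on that lemma throughout rather than arguing about comparability by hand.
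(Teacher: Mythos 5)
Your proof is correct and takes essentially the same route as the paper's: both inclusions are established by unwinding the definitions of $AWS(X)$ and $\Delaws{X}{v}$, with the deletion-only monotonicity of weak-simpliciality doing the work. You even make explicit two steps the paper's terse proof glosses over---that deleting $v$ cannot destroy weak-simpliciality of $u$ (giving $AWS(Y) \subseteq AWS(X)$, since neighborhoods only shrink and inclusions are preserved), and the case $u \notin N^{1:2}(v)$ in the reverse inclusion, where adding $v$ leaves $G[N^{1:2}[u]]$ untouched---so your version is, if anything, more complete than the paper's.
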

\begin{proof}
    Let $u$ be a vertex in $AWS(Y)$. 
    We prove $u$ is included in $AWS(X) \setminus \Delaws{X}{v}$. 
    From the definition of $AWS(X)$, $u \in AWS(X)$ holds. 
    If $u \in \Delaws{X}{v}$, then $u$ is not weak-simplicial in $Y \cup \set{u}$. 
    Since $u \in AWS(Y)$, $u \not\in \Delaws{X}{v}$. 
    We prove the other direction. 
    Let $u$ be a vertex in $AWS(X) \setminus \Delaws{X}{v}$. 
    From the definition of $AWS(X)$ and $\Delaws{X}{v}$,
    $u$ is weak-simplicial in $X \cup \set{v, u}$. 
    Hence, $u \in AWS(Y)$ and the statement holds. \qed
\end{proof}

Note that by just removing redundant vertices, 
$WS(Y)$ and $AWS(X)$ can be easily sorted if $WS(X)$ and $AWS(X)$ were already sorted. 
We next consider how to compute $\Delws{X}{v}$ and $\Delaws{X}{v}$. 
We first show a characterization of a vertex in $\Delws{X}{v}$ and $\Delaws{X}{v}$. 
In the following lemmas, let $X$ be a solution, $v$ be a vertex in $AWS(X)$, and $Y$ be a solution $X \cup \set{v}$. 

\begin{lemma}
    \label{lem:delws:neighbor}
    Let $u$ be a vertex in $N_{Y}(v) \cap WS(X)$. 
    $u \in \Delws{X}{v}$ if and only if 
    $N_X(u)$ includes $w$ which is incomparable to $v$. 
\end{lemma}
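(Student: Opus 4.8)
The plan is to translate membership in $\Delws{X}{v}$ into a statement about incomparable pairs inside the neighborhood of $u$ in $G[Y]$, and then to use the hypothesis $u \in WS(X)$ together with bipartiteness to isolate exactly which pair can be responsible. First I would record two structural facts. Since $u \in N_Y(v)$, the vertices $u$ and $v$ are adjacent, so $v \in N_Y(u)$ and hence $N_Y(u) = N_X(u) \cup \set{v}$. Moreover $Y$ is a solution, so $G[Y]$ is bipartite; as $v$ and every $w \in N_X(u)$ are neighbors of $u$, they all lie on the side of $G[Y]$ opposite to $u$, so $v$ and $w$ are non-adjacent and $N_Y(w) = N_X(w)$, $N_Y(v) = N_X(v)$. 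This guarantees that the comparability of $v$ with any $w \in N_X(u)$, and the comparability of any two vertices of $N_X(u)$, is the same whether evaluated in $G[X]$ or in $G[Y]$.

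With these facts I would unfold the definitions. Under the hypothesis $u \in N_Y(v) \cap WS(X)$, the outer condition $u \in N^{1:2}(v) \cap WS(X)$ required for membership in $\Delws{X}{v}$ already holds, so $u \in \Delws{X}{v}$ is equivalent to $u \notin WS(Y)$. Since $G[Y]$ is bipartite, $N_Y(u)$ is automatically independent, so $u \notin WS(Y)$ is equivalent to the existence of an incomparable pair inside $N_Y(u) = N_X(u) \cup \set{v}$. The backward direction is then immediate: if some $w \in N_X(u)$ is incomparable to $v$, then $w$ and $v$ form an incomparable pair in $N_Y(u)$, so $u \notin WS(Y)$. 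For the forward direction I would take any incomparable pair $a, b \in N_Y(u)$; they cannot both lie in $N_X(u)$, because $u \in WS(X)$ forces every pair of $N_X(u)$ to be comparable and we observed this comparability survives in $G[Y]$. Hence one of $a, b$ is $v$ and the other is the desired $w \in N_X(u)$ incomparable to $v$.

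The step I expect to carry the real content is the preservation of comparability: justifying that $u \in WS(X)$ still certifies comparability of every pair of $N_X(u)$ once $v$ has been added. This is not completely automatic, since adding $v$ could a priori change some neighborhood; the bipartite observation above, namely that no vertex of $N_X(u)$ acquires $v$ as a neighbor (otherwise $u$, that vertex, and $v$ would form a triangle in $G[Y]$), is precisely what rules this out, and I would state it explicitly. Everything else follows by directly unfolding the definitions of weak-simplicial and of $\Delws{X}{v}$, so no further case analysis is needed.
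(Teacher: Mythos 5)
Your proposal is correct and follows essentially the same route as the paper's proof: both directions reduce membership in $\Delws{X}{v}$ to the existence of an incomparable pair in $N_Y(u)$, and the forward direction uses bipartiteness of $G[Y]$ to show that no $w \in N_X(u)$ is adjacent to $v$, so neighborhoods and hence comparability are unchanged from $G[X]$, forcing one element of the incomparable pair to be $v$. If anything, you are more precise than the paper, whose closing contradiction is stated as being with ``$X$ is a solution'' when the actual contradiction is with $u \in WS(X)$ -- a point your preservation-of-comparability step makes explicit.
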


\begin{proof}
    If $u$ has a neighbor $w$ in $X$ which is incomparable to $v$, 
    then from the definition,  $u$ is not weak-simplicial.
    
    Let $u$ be a vertex in  $\Delws{X}{v}$. 
    There is a pair of vertices $w_1$ and $w_2$ in $N_{Y}(u)$ which are incomparable. 
    If $w_1$ or $w_2$ is equal to $v$, then the statement holds.
    Hence, we assume that both $w_1$ and $w_2$ are not equal to $v$. 
    Since $G[Y]$ is bipartite, $w_1$ and $w_2$ are not adjacent to $v$. 
    Hence, $N_X(w_1) = N_Y(w_1)$ and $N_X(w_2) = N_Y(w_2)$ hold. It contradicts that $X$ is a solution and the statement holds. 
    \qed
\end{proof}

\begin{figure}[t]
    \centering
    \includegraphics[width=0.6\textwidth]{./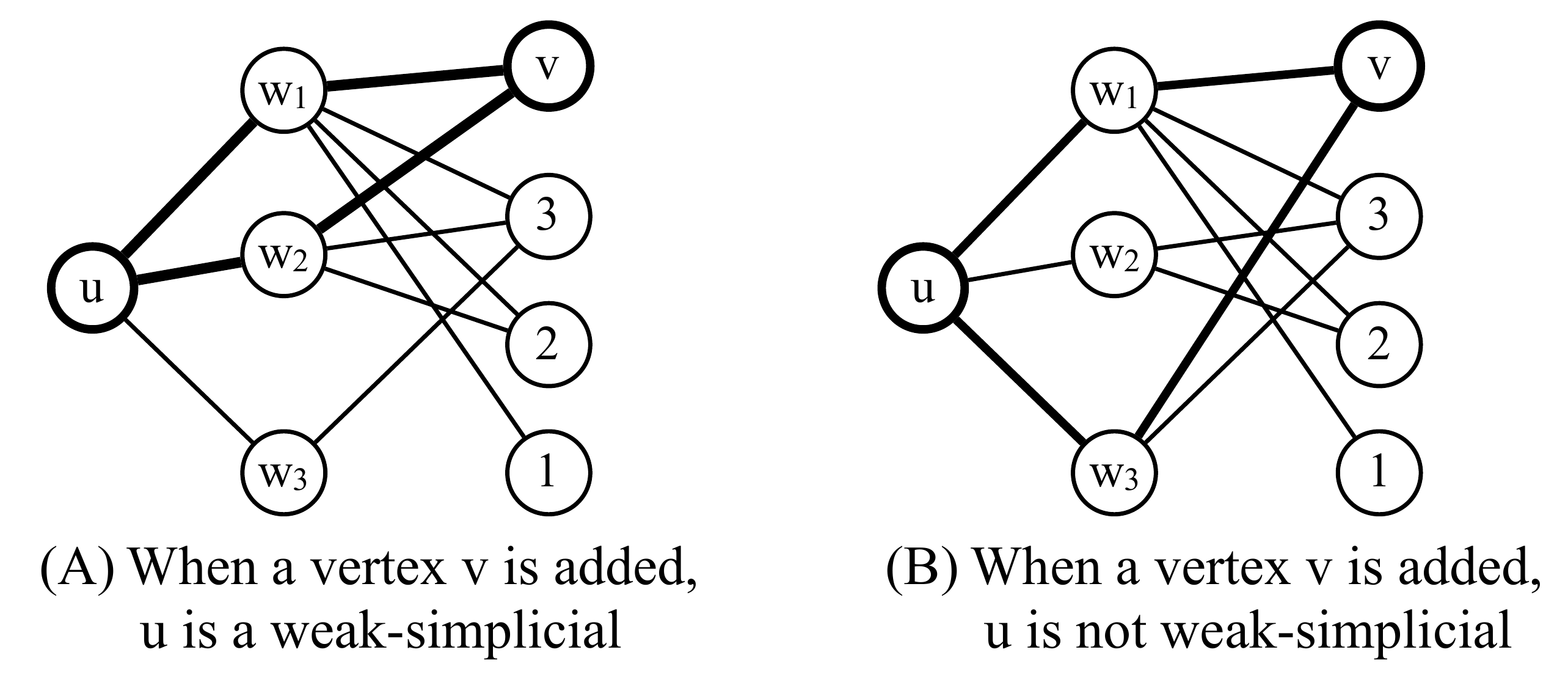}
    \caption{Let $X$ be a set of vertices $\set{u, w_1, w_2, w_3, 1, 2, 3}$. In $G[X]$, $L(X, u) = (w_1, w_2, w_3)$. In case (A), a vertex $u$ is still weak-simplicial. In case (B), however, $u$ is not weak-simplicial since $N(v)$ includes only $w_1$ and $w_3$.}
    \label{fig:dist2}
\end{figure}

\begin{lemma}
    \label{lem:delws:two}
    Let $u$ be a vertex in $N^2_{Y}(v) \cap WS(X)$. 
    Then, $u \in \Delws{X}{v}$ if and only if 
    there exists a pair of vertices $w_1, w_2 \in N_X(u)$ which satisfy 
    $N_X(w_1) \subset N_X(w_2)$, $v \in N_{Y}(w_1)$, and $v \not\in N_{Y}(w_2)$. 
\end{lemma}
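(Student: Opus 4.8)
The plan is to exploit that $u$ and $v$ lie on the same side of the bipartition. Since $u \in N^2_Y(v)$, the two vertices are at distance exactly two, so $v \notin N_Y(u)$ and therefore $N_Y(u) = N_X(u)$; moreover $N_X(u)$ stays independent in $G[Y]$ because adding $v$ creates edges only incident to $v$, and every vertex of $N_X(u)$ lies on the side opposite to $v$. Consequently the independence condition for $u$ to be weak-simplicial is untouched, and the only way $u$ can leave $WS(Y)$ is that some pair $w_1, w_2 \in N_X(u)$ that was comparable in $G[X]$ becomes incomparable in $G[Y]$. First I would record this reduction, using that $u \in WS(X)$ guarantees every such pair is comparable before adding $v$.

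For the forward direction I would analyse a single comparable pair $w_1, w_2 \in N_X(u)$, say with $N_X(w_1) \subseteq N_X(w_2)$, and track how adding $v$ changes their neighborhoods. Each $N_Y(w_i)$ equals $N_X(w_i)$ together with $v$ exactly when $w_i$ is adjacent to $v$, and since $v \notin X$, this $v$ is the only possible new element. A short case check over which endpoints gain $v$ shows the inclusion is preserved in every case except one: when the strictly smaller set gains $v$ while the larger does not. Hence, if $u \in \Delws{X}{v}$, some comparable pair must have become incomparable, and that pair necessarily satisfies $N_X(w_1) \subset N_X(w_2)$, $v \in N_Y(w_1)$, and $v \notin N_Y(w_2)$, which is the claimed witness.

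For the converse I would simply check that such a witness does break comparability in $G[Y]$. Strictness of $N_X(w_1) \subset N_X(w_2)$ yields a vertex $z \in N_X(w_2) \setminus N_X(w_1)$; since $z \in X$ we have $z \neq v$, so $z \in N_Y(w_2) \setminus N_Y(w_1)$, while $v \in N_Y(w_1) \setminus N_Y(w_2)$ by hypothesis. Thus $w_1$ and $w_2$ are incomparable in $G[Y]$, both lie in $N_Y(u) = N_X(u)$, and so $u$ is no longer weak-simplicial; combined with $u \in N^2_Y(v) \cap WS(X)$ this gives $u \in \Delws{X}{v}$.

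The main obstacle is the bookkeeping of the forward case analysis: I must verify that among the four ways the two endpoints may or may not acquire $v$, only the single configuration above destroys the inclusion, and in particular that the equality case $N_X(w_1) = N_X(w_2)$ never produces incomparability regardless of which endpoint gains $v$. The other delicate point is justifying up front that $v$ contributes nothing to $N_Y(u)$ and no edges inside $N_X(u)$, so that no other mechanism can interfere; both rely squarely on bipartiteness and on $u$ sitting at distance exactly two from $v$.
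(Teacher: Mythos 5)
Your proposal is correct and follows essentially the same route as the paper's proof: the converse verifies the witness pair directly (you supply the strictness witness $z$ the paper leaves implicit), and the forward direction, like the paper's, takes an incomparable pair in $G[Y]$, uses $u \in WS(X)$ to get comparability in $G[X]$ with the equality case $N_X(w_1) = N_X(w_2)$ ruled out, and concludes $v \in N_Y(w_1)$, $v \notin N_Y(w_2)$. Your explicit preliminary reduction --- that $u \in N^2_Y(v)$ and bipartiteness give $N_Y(u) = N_X(u)$ with independence preserved, so only pair comparability can change --- is a detail the paper takes for granted, not a different approach.
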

\begin{proof}
    The if part is easily shown by the assumption of the incomparability of $w_1$ and $w_2$. 
    We next prove the other direction. 
    We assume that $u \in \Delws{X}{v}$. 
    Hence, there is a pair of neighbors $w_1$ and $w_2$ of $u$ such that they are incomparable in $Y$. 
    Without loss of generality, $N_{X}(w_1) \subset N_{X}(w_2)$ holds since $u$ is in $WS(X)$. 
    If $N_X(w_1) = N_X(w_2)$, then $w_1$ and $w_2$ are comparable in $Y$. 
    Since $w_1$ and $w_2$ are incomparable in $Y$, $w_1$ is adjacent to $v$ and $w_2$ is not adjacent to $v$. 
    Thus, the statement holds. \qed
\end{proof}

\begin{lemma}
    \label{lem:delaws:neighbor}
    Let $u$ be a vertex in $N(v) \cap AWS(X)$ and $Z = X \cup \set{u, v}$. 
    Then, $u \in \Delaws{X}{v}$ if and only if 
    $u$ has a neighbor $w$ in $Z$ which is incomparable to $v$. 
\end{lemma}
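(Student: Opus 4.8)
The plan is to prove the two directions of the equivalence separately. First I would note that, since $u \in N(v) \cap AWS(X)$ already guarantees $u \in N^{1:2}(v) \cap AWS(X)$, the definition of $\Delaws{X}{v}$ reduces the claim to the single statement that $u \notin WS(Z)$ if and only if $u$ has a neighbor in $Z$ incomparable to $v$. I would also record the elementary fact that the neighbors of $u$ in $Z = X \cup \{u, v\}$ are exactly $N_X(u) \cup \{v\}$, since $v \in N(u)$.

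The ($\Leftarrow$) direction is immediate: if some $w \in N_Z(u)$ is incomparable to $v$, then $w \neq v$, and $w$ and $v$ are two neighbors of $u$ in $Z$ that fail to be comparable, so $u$ is not weak-simplicial in $Z$; together with $u \in N^{1:2}(v) \cap AWS(X)$ this places $u$ in $\Delaws{X}{v}$.

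For the ($\Rightarrow$) direction I would start from $u \notin WS(Z)$, which yields two neighbors $a, b \in N_Z(u)$ incomparable in $G[Z]$. If one of them equals $v$, the other is immediately the desired witness, so the real work is the case $a, b \in N_X(u)$. Here I would use that $u \in AWS(X)$: the vertex $u$ is weak-simplicial in $G[X \cup \{u\}]$, so $a$ and $b$ were comparable there, say $N_{X \cup \{u\}}(a) \subseteq N_{X \cup \{u\}}(b)$. Since the only difference between $G[X \cup \{u\}]$ and $G[Z]$ is the vertex $v$, a short case check on whether $a$ and $b$ are adjacent to $v$ shows that the pair can become incomparable only when $a \sim v$, $b \not\sim v$, and the inclusion is strict.

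The crux is then to show that in this configuration at least one of $a, b$ is itself incomparable to $v$, and I would argue this by contradiction. If both $a$ and $b$ were comparable to $v$ in $G[Z]$, then $v \in N_Z(a) \setminus N_Z(v)$ forces $N_Z(v) \subseteq N_Z(a)$, while $a \in N_Z(v) \setminus N_Z(b)$ (using bipartiteness, so $a \not\sim b$) forces $N_Z(b) \subseteq N_Z(v)$; chaining gives $N_Z(b) \subseteq N_Z(a)$, contradicting the incomparability of $a$ and $b$. Hence one of $a, b$ is incomparable to $v$ and is a neighbor of $u$, completing the argument. I expect this inclusion-chaining step to be the main obstacle: comparability is not transitive in general, and the argument closes only because bipartiteness pins down the two forced inclusions so that they line up.
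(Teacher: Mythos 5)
Your proof is correct, and its skeleton matches the paper's: the backward direction is immediate from the definition of weak-simplicial, and for the forward direction both you and the paper take an incomparable pair $a, b \in N_Z(u)$ and split on whether $v$ belongs to it. Where you genuinely diverge is the case $a, b \in N_X(u)$. The paper kills this case in one line: it asserts that bipartiteness forces $v$ to be adjacent to neither $a$ nor $b$, so $N_Z(a) = N_{X\cup\set{u}}(a)$ and $N_Z(b) = N_{X\cup\set{u}}(b)$, and the comparability guaranteed by $u \in AWS(X)$ is preserved in $Z$ --- i.e.\ the case simply cannot occur. You instead permit $a \sim v$, pin down via the case check that incomparability can only arise when $a \sim v$, $b \not\sim v$ and the inclusion $N_{X\cup\set{u}}(a) \subsetneq N_{X\cup\set{u}}(b)$ is strict, and then extract a witness by inclusion chaining. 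Your route buys robustness: the paper's appeal is literally to ``$G[Y]$ is bipartite'' (presumably meaning $G[Z]$), but bipartiteness of $G[Z] = G[X \cup \set{u, v}]$ is not actually established anywhere --- only $G[X\cup\set{u}]$ and $G[X\cup\set{v}]$ are known to be solutions, and the edge $\set{u,v}$ could in principle close an odd cycle through both new vertices; your analysis covers exactly the configuration the paper's bipartiteness claim waves away. Two small remarks. First, your justification ``using bipartiteness, so $a \not\sim b$'' should be grounded in $u \in AWS(X)$ itself (weak-simpliciality of $u$ in $G[X\cup\set{u}]$ makes $N(u)$ independent there, and $a, b \in X$) or in bipartiteness of the solution $G[X\cup\set{u}]$; invoking bipartiteness of $G[Z]$ would equally exclude $a \sim v$ and collapse the very case you are analysing. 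Second, your crux admits a shortcut: in a simple graph adjacent vertices are automatically incomparable, since $v \in N_Z(a) \setminus N_Z(v)$ and $a \in N_Z(v) \setminus N_Z(a)$, so in the case $a \sim v$ the vertex $a$ is immediately the desired neighbor of $u$, with no chaining needed. The same observation also quietly repairs a step you share with the paper: $u \notin WS(Z)$ yields an incomparable pair of neighbors even when it is the independence condition of weak-simpliciality that fails, because the offending edge itself produces an incomparable adjacent pair.
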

\begin{proof}
    The if part is trivial from the definition of weak-simplicial. 
    We prove the only if part. 
    We assume that $u \in \Delaws{X}{v}$ holds. 
    Since $u \in \Delaws{X}{v}$, 
    $u$ has a pair of neighbors $w_1$ and $w_2$ which are incomparable in $Z$. 
    If $w_1$ or $w_2$ is equal to $v$, then $u$ has a neighbor $w$ which is incomparable to $v$ and the statement holds.
    We next assume that $w_1$ and $w_2$ are distinct from $v$. 
    $v$ is not adjacent to $w_1$, $w_2$, or both of them since $G[Y]$ is bipartite. 
    Hence, $w_1$ and $w_2$ are comparable in $Z$ since $w_1$ and $w_2$ are comparable in $G[X]$. 
    This contradicts that $w_1$ and $w_2$ are incomparable in $Z$ and the statement holds. \qed
\end{proof}

\begin{lemma}
    \label{lem:delaws:two}
    Let $u$ be a vertex in $N^2(v) \cap AWS(X)$ and $Z = X \cup \set{u, v}$. 
    Then, $u \in \Delaws{X}{v}$ if  and only if 
    there exists a pair of vertices $w_1, w_2 \in N_{Z}(u)$ which satisfy 
    $N_Z(w_1) \subset N_{Z}(w_2)$,
    $v \in N_{Z}(w_1)$, and 
    $v \not\in N_{Z}(w_2)$.     
\end{lemma}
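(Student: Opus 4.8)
The plan is to mirror the proof of Lemma~\ref{lem:delws:two} almost verbatim, transporting it from the $WS$ setting to the $AWS$ setting. The one structural fact I would establish up front is this: since $u \in AWS(X)$, the vertex $u$ is weak-simplicial in $G[X \cup \set{u}]$, so every pair of neighbors of $u$ is comparable there; and since $u \in N^2(v)$ the vertices $u$ and $v$ are non-adjacent, so $N_Z(u) = N_{X}(u)$ and passing from $X \cup \set{u}$ to $Z = X \cup \set{u,v}$ changes the neighborhood of a neighbor $w$ of $u$ only by possibly inserting the single vertex $v$. Thus $N_Z(w) = N_{X \cup \set{u}}(w)$ when $v \notin N(w)$, and $N_Z(w) = N_{X \cup \set{u}}(w) \cup \set{v}$ when $v \in N(w)$.

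For the if direction I would argue directly. Given a pair $w_1, w_2 \in N_Z(u)$ witnessing the stated conditions, the vertex $v$ lies in $N_Z(w_1) \setminus N_Z(w_2)$, while the proper containment of their neighborhoods before $v$ is added supplies a vertex of $N_Z(w_2) \setminus N_Z(w_1)$. Hence $w_1$ and $w_2$ are incomparable in $Z$, so $u$ has two incomparable neighbors and is not weak-simplicial in $G[Z]$; by definition this is exactly $u \in \Delaws{X}{v}$.

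For the only-if direction, assume $u \in \Delaws{X}{v}$, i.e.\ $u$ is not weak-simplicial in $G[Z]$, so some pair $w_1, w_2 \in N_Z(u)$ is incomparable in $Z$. Because $u$ was weak-simplicial in $G[X \cup \set{u}]$, the same pair was comparable there, so without loss of generality $N_{X \cup \set{u}}(w_1) \subseteq N_{X \cup \set{u}}(w_2)$. The incomparability can only have been introduced by the insertion of $v$, which forces $v$ to be adjacent to exactly one of $w_1, w_2$; the containment then pins this down to $v \in N_Z(w_1)$ and $v \notin N_Z(w_2)$. Finally, ruling out equality of the two neighborhoods (under equality the pair stays comparable even after $v$ is added) upgrades the containment to the strict relation demanded by the statement.

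The step I expect to require the most care is the bookkeeping of which ambient set the containment is read in. Since $v$ is precisely the inserted vertex, the genuine comparability lives in $G[X \cup \set{u}]$, whereas the adjacency conditions $v \in N_Z(w_1)$ and $v \notin N_Z(w_2)$ are stated in $Z$; these are mutually consistent only once one interprets the containment on the neighborhoods with $v$ removed (equivalently, in $G[X \cup \set{u}]$). This is the identical subtlety already negotiated in Lemma~\ref{lem:delws:two}, so aligning the two should be routine once the preliminary neighborhood identities above are in place.
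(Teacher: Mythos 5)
Your proof is correct and takes essentially the same route as the paper's: the if direction exhibits $v$ together with a witness of the strict containment as an incomparability witness in $Z$, and the only-if direction takes an incomparable pair, uses $u \in AWS(X)$ to get comparability (WLOG containment) in $G[X \cup \set{u}]$, and concludes $v \in N_Z(w_1)$ and $v \notin N_Z(w_2)$. Your closing observation about the ambient set is exactly right --- read literally, $N_Z(w_1) \subset N_Z(w_2)$ together with $v \in N_Z(w_1)$ would force $v \in N_Z(w_2)$, so the containment must be interpreted in $G[X \cup \set{u}]$, which is precisely how the paper's own proof writes it (as $N_{X\cup\set{u}}(w'_1) \subset N_{X\cup\set{u}}(w'_2)$); your version merely makes explicit the exactly-one-of-$w_1,w_2$ adjacency case analysis and the strictness step that the paper leaves implicit.
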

\begin{proof}
    From the assumption, $w_1$ and $w_2$ are incomparable in $Z$. 
    Hence, the if part holds. 
    We prove the other direction. 
    We assume that $u \in \Delaws{X}{v}$. 
    Hence, $u$ has a pair of vertices $w'_1$ and $w'_2$ which are incomparable in $Z$. 
    Without loss of generality, $N_{X\cup\set{u}}(w'_1) \subset N_{X\cup\set{u}}(w'_2)$ holds. 
    Since $w'_1$ and $w'_2$ are incomparable in $G[Z]$, $w'_1$ is adjacent to $v$. 
    Thus, the statement holds. \qed
\end{proof}


Next, we consider the time complexity of computing $\Delws{X}{v}$ and $\Delaws{X}{v}$. 
For analysing these computing time more precisely, 
we give two upper bounds with respect to the number of edges and the size of $N^2(v)$ in bipartite chain graphs. 
Note that $t$ is the maximum size of $K_{t,t}$ in $B$ as an induced subgraph. 

\begin{lemma}
    \label{lem:ws:vertices}
    Let $B$ be a bipartite chain graph and $v$ be a vertex in $B$. 
    Then, $\size{N^2(v)}$ is at most $\Delta$. 
\end{lemma}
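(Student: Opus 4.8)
The plan is to exploit the chain structure of $B$ to show that every length-two path leaving $v$ can be routed through a single fixed neighbor of $v$, namely the one whose neighborhood is largest under inclusion. First I would assume without loss of generality that $v \in X$, so that $N(v) \subseteq Y$ and, since $B$ is bipartite, $N^2(v) \subseteq X$. If $N(v) = \emptyset$ then $N^2(v) = \emptyset$ and the bound is trivial, so I would suppose $N(v) \neq \emptyset$.

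Because $B$ is bipartite chain, any two vertices of $Y$ are comparable; in particular the vertices of $N(v)$ are pairwise comparable, and hence totally ordered by neighborhood inclusion. Since $N(v)$ is finite and nonempty, I would select $y^* \in N(v)$ whose neighborhood is maximal, so that $N(y) \subseteq N(y^*)$ for every $y \in N(v)$.

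Next I would take an arbitrary $u \in N^2(v)$. By definition $u$ shares a common neighbor with $v$, i.e.\ there is some $y \in N(v)$ with $u \in N(y)$, and since $N(y) \subseteq N(y^*)$ this gives $u \in N(y^*)$. Hence $N^2(v) \subseteq N(y^*)$. Finally, $v \in N(y^*)$ (as $y^* \in N(v)$) while $v \notin N^2(v)$, so $N^2(v) \subseteq N(y^*) \setminus \set{v}$ and therefore $\size{N^2(v)} \le \size{N(y^*)} - 1 \le \Delta$.

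I do not expect a serious obstacle here. The only point that needs care is the observation that comparability \emph{within one part} is exactly what is available, so that the neighbors of $v$ --- all lying in $Y$ --- form a chain admitting a unique neighborhood-maximal element $y^*$. Once the union $\bigcup_{y \in N(v)} N(y)$ collapses to the single set $N(y^*)$, the bound follows immediately; in fact the same argument yields the slightly stronger estimate $\size{N^2(v)} \le \Delta - 1$.
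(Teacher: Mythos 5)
Your proof is correct and takes essentially the same approach as the paper: both arguments pick a neighborhood-maximal vertex $y^* \in N(v)$ (using that $N(v)$ lies in one part of the chain graph and is thus totally ordered by inclusion) so that $N^2(v) = \bigcup_{w \in N(v)} N(w)$ collapses into $N(y^*)$. Your version is in fact slightly more careful than the paper's --- the paper asserts $N^2(v) = N(y^*)$ even though $v \in N(y^*)$ while $v \notin N^2(v)$ --- and your resulting refinement $\size{N^2(v)} \le \Delta - 1$ is a valid, if minor, strengthening.
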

\begin{proof}
    Let $u$ be a maximum vertex in $N(v)$ with respect to inclusion of neighbors. 
    Hence, for any vertex $w \in N(v)$, $N(w) \subseteq N(u)$ holds. 
    Since $N^2(v) = \bigcup_{w \in N(v)} N(w)$, $N^2(v)$ is equal to $N(u)$. 
    Hence, the statement holds. \qed
\end{proof}

\begin{lemma}
    \label{lem:ws:edges}
    Let $B = (X, Y, E)$ be a bipartite chain graph. 
    Then, the number of edges in $B$ is $\order{t\Delta}$, where $t$ is the maximum size of a biclique in $B$. 
\end{lemma}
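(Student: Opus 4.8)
The plan is to exploit the nested neighborhood structure of a bipartite chain graph together with the biclique bound $t$. Write $E = \sum_{x \in X} \size{N(x)}$, so it suffices to control the degrees on the $X$ side. Since $B$ is a bipartite chain graph, the family $\inset{N(x)}{x \in X}$ is totally ordered under inclusion. The first step is to bound the number of non-isolated vertices of $X$ by $\order{\Delta}$: let $x^*$ be a vertex of $X$ whose neighborhood is maximal in this chain; then every $x \in X$ with $N(x) \neq \emptyset$ satisfies $N(x) \subseteq N(x^*)$, so $x$ shares a neighbor with $x^*$ and hence $x \in N^2(x^*) \cup \set{x^*}$. Lemma~\ref{lem:ws:vertices} gives $\size{N^2(x^*)} \le \Delta$, so at most $\Delta + 1$ vertices of $X$ contribute to $E$.

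The naive bound, multiplying these $\order{\Delta}$ non-isolated vertices by the maximum degree $\Delta$, only yields $\order{\Delta^2}$; to recover the factor $t$ I would split the non-isolated vertices of $X$ into \emph{heavy} vertices with $\size{N(x)} > t$ and \emph{light} vertices with $\size{N(x)} \le t$. The light vertices contribute at most $(\Delta + 1)\cdot t = \order{t\Delta}$ by the first step. The crux is to show there are at most $t$ heavy vertices. Suppose there were $t + 1$ of them; since their neighborhoods are nested and each has size at least $t + 1$, the smallest of these neighborhoods is contained in all the others and still has size at least $t + 1$. Taking these $t+1$ heavy vertices on one side and any $t+1$ vertices of that common neighborhood on the other side exhibits an induced $K_{t+1, t+1}$, contradicting that $t$ is the maximum biclique size. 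Hence the heavy vertices contribute at most $t \cdot \Delta$, and summing the two contributions gives $E = \order{t\Delta}$.

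The main obstacle is precisely this second step: a pure counting argument over the $\order{\Delta}$ non-isolated vertices is too weak, and the improvement from $\order{\Delta^2}$ to $\order{t\Delta}$ must come from converting a large family of high-degree, nested neighborhoods into a large biclique. I expect the nestedness, guaranteed by the bipartite chain property, to be exactly what keeps the common neighborhood large enough to force the forbidden $K_{t+1,t+1}$; the only additional care needed is the degenerate case of a graph with no edges, where $E = 0$ and the claimed bound holds trivially.
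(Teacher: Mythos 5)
Your proposal is correct and is essentially the paper's argument in mirrored form: the paper fixes a maximum vertex $v \in X$, orders its neighbors $u_1, \dots, u_{d(v)}$ by inclusion, charges the top $t$ of them $\order{t\Delta}$, and shows every remaining $u_i$ has degree at most $t$ via the same forbidden $K_{t+1,t+1}$ contradiction, while you run the identical split (heavy/light by degree threshold $t$, which for a nested family is the dual of splitting by rank in the chain) directly over the $X$-side degrees, with the same appeal to Lemma~\ref{lem:ws:vertices} to bound the number of non-isolated vertices by $\order{\Delta}$. Since the two decompositions are dual formulations of one counting fact about nested neighborhoods, there is nothing substantive to distinguish them.
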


\begin{proof}
    Let $v$ be a maximum vertex in $X$ with respect to inclusion of neighbors. 
    If $d(v) \le t$, then the statement holds from Lemma~\ref{lem:ws:vertices}. 
    We assume that $d(v) > t$. 
    We consider the number of edges in $B[N^{1:2}[v]]$. 
    Let $(u_1, \dots, u_{d(v)})$ be a sequence of vertices in $N(v)$ 
    such that $N(u_i) \subseteq N(u_{i+1})$ for $1 \le i < d(v)$. 
    For each $d(v) - t + 1 \le i \le d(v)$,  
    the sum of $\size{N(u_i)}$ is at most $\order{t\Delta}$. 
    We next consider the case for $1 \le i \le d(v) - t$. 
    Since $N(u_i)$ is a subset of $N(u_j)$ for any $i < j$, 
    $\size{N(u_i)}$ is at most $t$. 
    If $\size{N(u_i)}$ is greater than $t$, then $B$ has a biclique $K_{t + 1, t + 1}$. 
    Hence, the number of edges in $B$ is $\order{t\Delta}$ and the statement holds. \qed
\end{proof}


\begin{lemma}
    \label{lem:update:ws}
    Let $X$ be a solution, $v$ be a vertex in $\cand{X}$, and $Y$ be a solution $X \cup \set{v}$. 
    Then, we can compute $\Delws{X}{v}$ in $\order{t\Delta}$ time. 
\end{lemma}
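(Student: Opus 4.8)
The plan is to show that every vertex of $\Delws{X}{v}$ lies in the small, highly structured set $N^{1:2}(v)\cap WS(X)$, and then to charge all of the work to the edges of a single bipartite chain graph. Since $v\in\cand{X}\subseteq AWS(X)$, the vertex $v$ is weak-simplicial in $Y=X\cup\set{v}$, so by Lemma~\ref{lem:ws:bchain} the induced subgraph $Y[N^{1:2}[v]]$ is bipartite chain. Hence Lemma~\ref{lem:ws:edges} bounds its number of edges by $\order{t\Delta}$ and Lemma~\ref{lem:ws:vertices} gives $\size{N^2(v)}\le\Delta$. These two bounds are exactly the budget I want to meet, so it suffices to spend a constant amount of work per edge of $Y[N^{1:2}[v]]$, plus $\order{1}$ per vertex of $N^2(v)$, after an $\order{t\Delta}$ preprocessing pass.

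First I would precompute, in one scan over the edges between $N(v)$ and $N^2(v)$, the counter $c(w)=\size{N_X(w)\cap N(v)}$ for every $w\in N^2(v)$ together with $p=\size{N_X(v)}$; I assume the solution degrees $d_X(\cdot)$ are maintained by the algorithm. The point of these counters is that the comparability test appearing in both characterizations collapses to $\order{1}$ arithmetic: for $w\in N^2(v)$ the vertices $v$ and $w$ lie on the same side, with $N_Y(v)=N_X(v)$ and $N_Y(w)=N_X(w)$, so $v$ and $w$ are comparable in $Y$ exactly when $c(w)=p$ or $c(w)=d_X(w)$, i.e. incomparable iff $c(w)<p$ and $c(w)<d_X(w)$.

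For the neighbour part I would apply Lemma~\ref{lem:delws:neighbor}: a vertex $u\in N(v)\cap WS(X)$ is deleted iff some $w\in N_X(u)$ is incomparable to $v$, which I check by the $\order{1}$ rule above while scanning $N_X(u)$; summed over $u\in N(v)$ this reads each $N(v)$--$N^2(v)$ edge once, costing $\order{t\Delta}$. For the distance-two part I would apply Lemma~\ref{lem:delws:two}: a vertex $z\in N^2(v)\cap WS(X)$ is deleted iff it has neighbours $w_1\in N(v)$ and $w_2\notin N(v)$ with $N_X(w_1)\subsetneq N_X(w_2)$. Because $z$ is weak-simplicial in $X$, the set $N_X(z)$ is totally ordered by inclusion, so this condition is equivalent to $d_X(\min P_z)<d_X(\max Q_z)$, where $P_z=N_X(z)\cap N(v)$ and $Q_z=N_X(z)\setminus N(v)$ and strict containment among totally ordered neighbourhoods coincides with strict degree inequality. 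The value $d_X(\min P_z)$ is read off for free during the $N(v)$--$N^2(v)$ scan, and $d_X(\max Q_z)$ is obtained by walking the maintained sorted adjacency list $L(X,z)$ from its largest element downward until the first vertex outside $N(v)$ is met.

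The step I expect to be the main obstacle is precisely this last one: $Q_z$ consists of distance-three vertices, which are not controlled by the bipartite chain structure, so naively reading all of them would cost $\order{\Delta^2}$. The resolution is a charging argument: the prefix of $L(X,z)$ skipped before $\max Q_z$ is reached consists only of vertices of $P_z$, so its length is at most $c(z)$, and $\sum_{z\in N^2(v)}c(z)$ equals the number of $N(v)$--$N^2(v)$ edges, which is $\order{t\Delta}$ again. Adding the $\order{\size{N^2(v)}}=\order{\Delta}$ per-vertex tests, the entire computation of $\Delws{X}{v}$ stays within $\order{t\Delta}$.
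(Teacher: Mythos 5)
Your proposal is correct and follows the same skeleton as the paper's proof: both reduce membership in $\Delws{X}{v}$ to the characterizations of Lemma~\ref{lem:delws:neighbor} (for $N(v)$) and Lemma~\ref{lem:delws:two} (for $N^2(v)$), and both charge the work to the edge set of the bipartite chain graph induced on $N^{1:2}[v]$ in $G[Y]$, bounded by $\order{t\Delta}$ via Lemmas~\ref{lem:ws:bchain} and~\ref{lem:ws:edges}. Where you differ is in the level of mechanism, and the difference is substantive: the paper's proof merely asserts that the test of Lemma~\ref{lem:delws:two} can be done ``by scanning vertices with distance two from $v$'' in time linear in $\sum_{u \in N_X(v)}\size{N_X(u)}$, but that scan only traverses edges incident to $N(v)$, whereas the witness $w_2$ with $v \notin N_Y(w_2)$ lies at distance three from $v$ and is invisible to such a scan --- exactly the obstacle you identify. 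Your counter-based $\order{1}$ comparability test (comparable iff $c(w)=p$ or $c(w)=d_X(w)$, which is valid since $c(w)\le\min(p,d_X(w))$ always holds), the reduction of Lemma~\ref{lem:delws:two} to $d_X(\min P_z) < d_X(\max Q_z)$ (sound, because $N_X(z)$ is totally ordered when $z \in WS(X)$, and strict inclusion coincides with strict cardinality inequality within a totally ordered family), and the prefix-walk over the inclusion-sorted list charged against $c(z)$ together supply a concrete procedure where the paper leaves one implicit (the lists $L(X,\cdot)$ appear only in the caption of Fig.~\ref{fig:dist2} and in no maintained invariant of the stated algorithm). The one caveat: your argument presupposes that the degrees $d_X(\cdot)$, an $N(v)$-membership array, and the inclusion-sorted adjacency lists are maintained across the enumeration; updating the sorted lists when $v$ is added costs $\order{\Delta^2}$ per child, which does not fit this lemma's $\order{t\Delta}$ budget and must instead be charged to the per-child $\order{\Delta^2}$ term already present in Theorem~\ref{theo:time}, where it does fit. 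With that bookkeeping stated explicitly, your proof is, if anything, more rigorous than the paper's own.
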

\begin{proof}
    We first compute vertices in $WS(X)\cap N^2_X(v)$ that remain in $WS(Y)$. 
    From Lemma~\ref{lem:ws} and Lemma~\ref{lem:delws:two}, 
    $w \notin WS(Y)$ if and only if 
    there exists a pair of vertices $w_1, w_2 \in N(u)$ which satisfy 
    $N_X(w_1) \subset N_X(w_2)$, $v \in N_{Y}(w_1)$, and $v \not\in N_{Y}(w_2)$. 
    By scanning vertices with distance two from $v$,  
    this can be done in linear time in the size of 
    $\sum_{u \in N_X(v)} \size{N_X(u)}$. 
    Since $G[N^{1:2}(v)]$ is bipartite chain from Lemma~\ref{lem:ws:bchain},  
    $\sum_{u \in N_X(v)} \size{N_X(u)}$ is bounded by $\order{t\Delta}$ from Lemma~\ref{lem:ws:edges}.   Moreover, it can be determined whether $w \in N^2(v)$ and $v$ are comparable or not in this scan operation.

    We next compute vertices in $WS(X)\cap N_X(v)$ that remain in $WS(Y)$. 
    From Lemma~\ref{lem:delws:neighbor},
    $u$ is included in $\Delws{X}{v}$ if and only if $u$ has a neighbor $w$ which is incomparable to $v$. 
    Since $G[Y]$ is bipartite, $N_{X}(u)$ is included in $N^2_{Y}(v)$. 
    In the previous scan operation, we already know whether $w$ and $v$ are comparable or not. 
    Hence, we can compute whether $w \in WS(Y)$ or not in $\order{\size{N(u)}}$ time. 
    Since  $\order{\sum_{u \in N_{Y}(v)} \size{ N(u) } } = \order{t\Delta}$ holds from Lemma~\ref{lem:ws:edges}, 
    we can find $N_X(v) \cap \Delws{X}{v}$ in $\order{t\Delta}$ total time. 
    Hence, the statement holds\qed
\end{proof}

\begin{lemma}
    \label{lem:update:aws}
    Let $X$ be a solution and $v$ be a vertex in $\cand{X}$. 
    Then, we can compute $\Delaws{X}{v}$ in $\order{\Delta^2}$ time. 
\end{lemma}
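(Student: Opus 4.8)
The plan is to mirror the proof of Lemma~\ref{lem:update:ws}, splitting $\Delaws{X}{v}$ into its neighbor part $N(v)\cap AWS(X)$ and its distance-two part $N^2(v)\cap AWS(X)$, and deciding membership through Lemma~\ref{lem:delaws:neighbor} and Lemma~\ref{lem:delaws:two} respectively. The essential difference from the weak-simplicial case is that the candidates now lie \emph{outside} the solution $X$, so adding one of them together with $v$ need not preserve a bipartite chain around $v$; consequently the edge bound of Lemma~\ref{lem:ws:edges} is unavailable and each neighborhood can only be bounded trivially by $\Delta$. This is exactly the source of the degradation from $\order{t\Delta}$ to $\order{\Delta^2}$.

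First I would precompute, once, the comparability of $v$ with every same-side vertex $w\in N^2(v)$ with respect to the \emph{solution} neighborhoods: for each $a\in N_X(v)$ I scan $N(a)$ and accumulate $\size{N_X(w)\cap N_X(v)}$ for each $w$, then compare against $\size{N_X(w)}$ and $\size{N_X(v)}$. This costs $\order{\Delta^2}$. A short calculation shows that for a neighbor candidate $u\in N(v)\cap AWS(X)$ and any $w\in N_{X\cup\set{u,v}}(u)$ on $v$'s side, $w$ and $v$ are incomparable in $X\cup\set{u,v}$ if and only if $N_X(w)$ and $N_X(v)$ are incomparable, because the shared vertex $u$ cancels and the answer is independent of the particular neighbor $u$. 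Hence the neighbor part follows directly from Lemma~\ref{lem:delaws:neighbor}: for each $u\in N(v)\cap AWS(X)$ I scan $N(u)$ and place $u$ in $\Delaws{X}{v}$ as soon as a neighbor marked incomparable to $v$ is found, for a total of $\order{\sum_{u\in N(v)}\size{N(u)}}=\order{\Delta^2}$.

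The distance-two part is the crux. By Lemma~\ref{lem:delaws:two}, a candidate $w\in N^2(v)\cap AWS(X)$ lies in $\Delaws{X}{v}$ exactly when it has two neighbors $w_1,w_2$ with $N_X(w_1)\subsetneq N_X(w_2)$, $w_1$ adjacent to $v$ and $w_2$ not adjacent to $v$; that is, an \emph{inverting pair} in which a $v$-adjacent neighbor is strictly dominated by a non-$v$-adjacent one. I would enumerate the candidates by scanning, for each $a\in N(v)\cap X$, the set $N(a)\cap AWS(X)$; since a common neighbor $a$ of $v$ and $w$ must lie in $X$, this visits every relevant (common neighbor, candidate) pair exactly once, in $\order{\sum_{a\in N(v)}\size{N(a)}}=\order{\Delta^2}$ time, and lets me record for each candidate $w$ the $v$-adjacent neighbor of smallest neighborhood. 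Because $w\in AWS(X)$ is weak-simplicial in $X\cup\set{w}$, its neighbors are totally ordered by inclusion, so the inverting-pair test reduces to comparing this smallest $v$-adjacent neighbor with the largest non-$v$-adjacent neighbor of $w$: $w\in\Delaws{X}{v}$ iff the latter has a strictly larger neighborhood.

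The delicate step, and where I expect the real work to be, is obtaining that largest non-$v$-adjacent neighbor within the $\order{\Delta^2}$ budget, since the scan above exposes only the $v$-\emph{adjacent} neighbors of each candidate. When the overall maximum neighbor of $w$ is itself non-$v$-adjacent, the test is immediate; the subtle case is when $w$'s maximum neighbor is $v$-adjacent, where the relevant threshold is the largest neighbor in the next tier of the chain. The plan is to avoid rescanning each candidate's whole neighborhood, which would cost $\order{\Delta^3}$, by reading this extremal value off the totally ordered neighbor list maintained for each $AWS(X)$ vertex, so that every candidate is resolved in amortized constant time over the scan. Checking that this bookkeeping is consistent with the incremental maintenance of $AWS(X)$ (Lemma~\ref{lem:aws}) and that the total indeed stays $\order{\Delta^2}$ is the part of the argument that needs the most care.
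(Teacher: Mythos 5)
Your proposal is correct and takes essentially the same route as the paper's (much terser) proof: decide membership via Lemma~\ref{lem:delaws:neighbor} for $N(v)\cap AWS(X)$ and Lemma~\ref{lem:delaws:two} for $N^2(v)\cap AWS(X)$ by scanning the edges within distance two of $v$, with the trivial $\order{\Delta^2}$ edge bound replacing the chain bound of Lemma~\ref{lem:ws:edges}, exactly because the candidates lie outside $X$. The ``delicate step'' you flag---retrieving each candidate's largest non-$v$-adjacent neighbor without an $\order{\Delta^3}$ rescan---is silently elided in the paper's proof, and your ordered-neighbor-list bookkeeping does close it within budget, since inserting $v$ into the at most $\Delta$ affected lists costs $\order{\Delta}$ each and the skipped $v$-adjacent entries are charged to scan pairs already counted in the $\order{\Delta^2}$ bound.
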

\begin{proof}
    Let $Y$ be a solution $X \cup \set{v}$. 
    In the same fashion as Lemma~\ref{lem:update:ws}, 
    we can decide $u \in AWS(X \cup \set{v} \cap N^2_{Y}(v))$ in $\order{\Delta^2}$ total time. 
    By applying the above procedure for all vertices distance two from $v$, 
    we can obtain all vertices in $\Delaws{X}{v}$ in $\order{\Delta^2}$ time 
    since the number of edges can be bounded in $\order{\Delta^2}$. \qed
\end{proof}

From Lemma~\ref{lem:update:ws} and Lemma~\ref{lem:update:aws}, 
we can compute $\Delws{X}{v}$ and $\Delaws{X}{v}$ in $\order{t\Delta}$ and $\order{\Delta^2}$ time for each $v \in \cand{X}$, respectively. 
Hence, 
we can enumerate all children in $\order{\size{\cand{X}}t\Delta + \size{ch(X)}\Delta^2}$ time
from Lemma~\ref{lem:enum}, Lemma~\ref{lem:ws} and Lemma~\ref{lem:aws}. 
In the following theorem, we show the amortized time complexity and space usage of \EnumCB. 

\begin{theorem}
    \label{theo:time}
    \EnumCB enumerates all solutions in amortized $\order{kt\Delta^2}$ time by using $\order{n + m}$ space, 
    where $n$ is the number of vertices and $m$ is the number of edges in an input graph. 
\end{theorem}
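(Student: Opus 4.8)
The plan is to bound the total running time by summing the cost of one iteration over all nodes of the family tree $\Fam{G}$ and then dividing by the number $M$ of solutions. From the discussion immediately preceding the theorem, the work done in the iteration for a solution $X$ is $\order{\size{\cand{X}}t\Delta + \size{ch(X)}\Delta^2}$: for every candidate $v \in \cand{X}$ we spend $\order{t\Delta}$ time computing $\Delws{X}{v}$ (Lemma~\ref{lem:update:ws}) and then test, via Lemma~\ref{lem:ws}, whether $v$ generates a child; for each of the $\size{ch(X)}$ candidates that actually generates a child we additionally spend $\order{\Delta^2}$ time computing $\Delaws{X}{v}$ (Lemma~\ref{lem:update:aws}) and updating $AWS$ by Lemma~\ref{lem:aws}. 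By Lemma~\ref{lem:enum} it suffices to examine $\cand{X}$, so this accounts for the whole iteration.

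First I would split $\cand{X}$ into the $\size{ch(X)}$ candidates that generate children and those that fail. By Lemma~\ref{lem:fail:total} the number of failing candidates is at most $2k\Delta$, so $\size{\cand{X}} \le \size{ch(X)} + 2k\Delta$. Substituting this and using $t \le \Delta$, the per-node cost becomes $\order{\size{ch(X)}\Delta^2 + kt\Delta^2}$, where the $kt\Delta^2$ term is precisely the work wasted on failing candidates and is charged to $X$ itself, while the $\order{\size{ch(X)}\Delta^2}$ term is charged across the children of $X$.

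Summing over all nodes, the wasted term contributes $\order{M \cdot kt\Delta^2}$, since there are $M$ solutions. For the remaining term, $\sum_X \size{ch(X)}$ equals the number of edges of the family tree, which is $M - 1$ because $\Fam{G}$ is a tree (Lemma~\ref{lem:ftree}); hence that term contributes $\order{M\Delta^2}$. As $kt \ge 1$, the grand total is $\order{M \cdot kt\Delta^2}$, i.e. amortized $\order{kt\Delta^2}$ per solution, as claimed.

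For the space bound I would run the recursion as a depth-first search that keeps $WS$ and $AWS$ as global arrays and, at each recursive call from $X$ to a child $Y = X\cup\set{v}$, records only the difference sets $\Delws{X}{v}$ and $\Delaws{X}{v}$ on an undo stack, restoring them on backtracking (by Lemma~\ref{lem:ws} and Lemma~\ref{lem:aws} these differences determine the update exactly). The key observation that makes this $\order{n}$ rather than $\order{n^2}$ is monotonicity along a root-to-node path: because adding a vertex can only destroy the comparability of two neighbourhoods and never restore it, each vertex enters and leaves $WS$ (resp. $AWS$) at most once along such a path, so the total size of all recorded differences on the active DFS path is $\order{n}$; moreover $\cand{X}$ need not be stored explicitly per frame, since its enumeration can be resumed by scanning the global sorted $AWS$ from the remembered parent vertex, costing only $\order{1}$ stored state per frame. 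The main obstacle is exactly this space argument: one must verify the monotonicity claim carefully and confirm that the candidate iteration is resumable from $\order{1}$ state per stack frame, so that the undo stack together with the global $\order{n+m}$ adjacency and $WS/AWS$ structures yields $\order{n+m}$ space in total.
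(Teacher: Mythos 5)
Your time analysis is essentially identical to the paper's: the same per-iteration bound $\order{\size{\cand{X}}t\Delta + \size{ch(X)}\Delta^2}$, the same substitution $\size{\cand{X}} \le \size{ch(X)} + O(k\Delta)$ via Lemma~\ref{lem:fail:total} together with $t \le \Delta$, and the same charging of $\sum_{X}\size{ch(X)} \le M$ over the edges of the family tree (Lemma~\ref{lem:ftree}). Your space argument is correct as well and is in fact more careful than the paper's, which merely asserts that the $WS$ and $AWS$ structures demand linear space; your undo-stack justification --- resting on the monotonicity from Lemmas~\ref{lem:ws} and~\ref{lem:aws} that along a root-to-node path each vertex enters and leaves $WS$ (resp.\ $AWS$) at most once, plus $\order{1}$ resumable state per frame for the candidate scan --- is the standard way to make that one-line claim rigorous under recursion depth up to $n$.
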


\begin{proof}
    \EnumCB uses $AWS(X)$ and $WS(X)$ as data structures. 
    Each data structure demands linear space and the total space usage of \EnumCB is $\order{n + m}$ space. 
    Hence, the total space usage of \EnumCB is linear. 
    We consider the amortized time complexity of \EnumCB. 
    From Lemma~\ref{theo:enum}, \EnumCB enumerates all solutions. 
    From Lemma~\ref{lem:update:ws} and Lemma~\ref{lem:update:aws}, 
    \EnumCB computes all children and updates all data structures in $\order{\size{\cand{X}}t\Delta + \size{ch(X)}\Delta^2}$ time. 
    From Lemma~\ref{lem:fail:total}, $\size{\cand{X}}$ is at most $\size{ch(X)} + k\Delta$. 
    Hence, we need $\order{(\size{ch(X)} + k\Delta)t\Delta + \size{ch(X)}\Delta^2}$ time to generate all children. 
    Note that this computation time is bounded by $\order{(\size{ch(X)} + kt)\Delta^2}$. 
    We consider the total time to enumerate all solution. 
    Since each iteration $X$ needs $\order{(\size{ch(X)} + kt)\Delta^2}$ time, 
    the total time is $\order{\sum_{X \in \sig S}(\size{ch(X)} + kt)\Delta^2}$ time, 
    where $\sig S$ is the set of solutions. 
    Since $\order{\sum_{X \in \sig S}\size{ch(X)}\Delta^2}$ is bounded by $\order{\size{\sig S}\Delta^2}$, 
    the total time is $\order{\size{\sig S}kt\Delta^2}$ time. 
    Therefore, \EnumCB enumerates all solution in amortized $\order{kt\Delta^2}$ time and the statement holds. \qed
\end{proof}

\section{Conclusion}
\label{sec:conc}
In this paper, we propose a vertex ordering \cbeo 
by relaxing a vertex ordering proposed by Uehara~\cite{Uehara:ICALP:2002}. 
A bipartite graph $B$ is chordal bipartite if and only if $B$ has \cbeo, 
that is, this vertex ordering characterize chordal bipartite graphs. 
This ordering comes from hypergraph acyclicity and the relation between $\beta$-acyclic hypergraphs and chordal bipartite graphs. 
In addition, we also show that a vertex $v$ is weak-simplicial if and only if $G[N^{1:2}[v]]$ is bipartite chain. 
By using these facts, we propose an amortized $\order{kt\Delta^2}$ time algorithm \EnumCB. 

As future work, the following two enumeration problems are interesting: 
Enumeration of bipartite induced subgraph for dense graphs and
enumeration of chordal bipartite subgraph enumeration. 
For dense graphs, \EnumCB does not achieve an amortized linear time enumeration. 
If an input graph is biclique, then the time complexity of \EnumCB becomes $\order{nm}$ time. 
Hence, it is still open that there is an amortized linear time enumeration algorithm for chordal bipartite induced subgraph enumeration problem. 

\bibliographystyle{abbrv}
\bibliography{main.bbl}

\end{document}